\DeclareMathOperator{\const}{const}
\DeclareMathOperator{\minor}{minor}
\DeclareMathOperator{\Pfaffian}{Pfaffian}
\DeclareMathOperator{\linearspan}{span}
\newtheorem{theorem}{Theorem}
\newtheorem*{conjecture}{Conjecture}
\newtheorem{lemma}{Lemma}
\theoremstyle{definition}
\newtheorem{dfn}{Definition}
\newtheorem{mpt}{Assumption}
\theoremstyle{remark}
\newtheorem{remark}{Remark}
\newtheorem{example}{Example}
\author{I.G. Korepanov}
\title{Free fermions on a piecewise linear four-manifold. II:~Pachner moves}
\date{May 2016 -- March 2017}
\begin{document}

\sloppy

\maketitle

\begin{abstract}
This is the second in a series of papers where we construct an invariant of a four-dimen\-sional piecewise linear manifold~$M$ with a given middle cohomology class $h\in H^2(M,\mathbb C)$. This invariant is the square root of the torsion of unusual chain complex introduced in Part~I of our work, multiplied by a correcting factor. Here we find this factor by studying the behavior of our construction under all four-dimen\-sional Pachner moves, and show that it can be represented in a multiplicative form: a product of same-type multipliers over all 2-faces, multiplied by a product of same-type multipliers over all pentachora.
\end{abstract}

\section{Introduction}\label{s:iintro}

This is the continuation of paper~\cite{part-I}; we also call this latter `Part~I', while the present paper `Part~II'. We refer to formulas, definitions, etc.\ from Part~I in the following format: formula~(I.5) means formula~5 in Part~I, etc.

\medskip

A standard way to define a piecewise linear (PL) manifold is via its \emph{triangulation}. In this paper, we will be dealing with \emph{four-dimen\-sion\-al} manifolds, and a triangulation of such manifold means that it is represented as a union of 4-simplices, also called \emph{pentachora}, glued together in a proper way that can be described purely combinatorially. An \emph{invariant} of such manifold is a quantity that may need an actual triangulation for its calculation, but must \emph{not} depend on this triangulation.

A theorem of Pachner~\cite{Pachner} states that a triangulation of a PL manifold can be transformed into any other triangulation using a finite sequence of \emph{Pachner moves}; monograph~\cite{Lickorish} can be recommended as a pedagogical introduction to this subject. And, as indicated in~\cite[Section~1]{Lickorish}, in order to construct invariants of PL manifolds, it makes sense to construct algebraic relations corresponding to Pachner moves, also called their \emph{algebraic realizations}.

It turns out that very interesting mathematical structures appear if we begin constructing a realization of four-dimen\-sion\-al Pachner moves by ascribing a \emph{Grassmann--Gaussian weight} to each pentachoron. Here `Gaussian' means that this weight is proportional to the exponential of a quadratic form, and `Grassmann' means that this form depends on \emph{anticommuting} Grassmann variables. Each Grassmann variable is supposed to live on a 3-face (tetrahedron) of a pentachoron, and gluing two pentachora along a 3-face corresponds to \emph{Berezin integration} w.r.t.\ the corresponding variable. A large family of such realizations for Pachner move 3--3 was discovered in paper~\cite{KS2}, and then a full parameterization for (a Zariski open set of) such relations was found in~\cite{full-nonlinear}. A beautiful fact is that this \emph{nonlinear} parameterization goes naturally in terms of a \emph{2-cocycle} given on both initial and final configurations of the Pachner move; we call these respective configurations (clusters of pentachora) the \emph{left-} and \emph{right-hand side} (l.h.s.\ and r.h.s.) of the move.

There was, however, one unsettled problem with the realization of move 3--3 in~\cite{full-nonlinear}: not all involved quantities were provided with their explicit expressions in terms of the 2-cocycle. In particular, Theorem~9 in~\cite{full-nonlinear} was just an existence theorem for the proportionality coefficient between the Berezin integrals representing the l.h.s.\ and r.h.s.\ of the move, while this coefficient is crucial for constructing an invariant for a whole `big' manifold. Also, it remained to find realizations for the rest of Pachner moves, namely 2--4 and 1--5.

One possible solution to the problem with the coefficient was proposed in~\cite{gce}, in a rather complicated way combining computational commutative algebra with a guess-and-try method, and leaving the feeling that the algebra behind it deserves more investigation.

It turns out that a more transparent way to solving the mentioned problem with the coefficient appears if we consider this problem \emph{together} with constructing formulas for moves 2--4 and 1--5. This is what we are doing in the present paper: we provide all necessary formulas for coefficients, and in a multiplicative form suitable for `globalizing', that is, transition to a formula for the whole manifold. To be exact, we construct an invariant of a pair $(M,h)$, where $M$ is a four-dimen\-sional piecewise linear manifold, and $h\in H^2(M,\mathbb C)$ is a given middle cohomology class.

One algebraic problem still remains unsolved though, namely, a formal proof of what we have to call `Conjecture', see Subsection~\ref{ss:ER}, and what is actually a firmly established mathematical fact. That is simply a formula involving ten indeterminates (over field~$\mathbb C$), whose both sides are composed using the four arithmetic operations and also square root signs and parentheses. The experience gained in the previous work~\cite{rels,KS2} led the author to the idea that such formula exists---and indeed, the reader can check it on a computer by substituting any random numerical values for the indeterminates\footnote{For instance, using the program code available from the author and written for \emph{Maxima} computer algebra system.}. Many such checks have been already carried out, but the available computer capabilities are not enough to check our formula symbolically\footnote{Currently, efforts to solve this problem are being made mainly in two following directions: write a \emph{specialized} software that would be able to handle more effectively our specific expressions, and/or---of course!---discover their new properties that would enable us to find a `conceptual' proof. Also, it can be proved that if such a formula has been checked (using exact arithmetic) for a huge enough set of tuples of arguments, than it is right for \emph{all} arguments.}.

Below,
\begin{itemize}\itemsep 0pt
 \item in Section~\ref{s:P}, we recall the Pachner moves in four dimensions, and also express the moves 2--4 and 1--5 in terms of 3--3 and two kinds of auxiliary moves `0--2'. Introducing these auxiliary moves makes our reasonings and formulas simple and transparent,
 \item in Section~\ref{s:u}, we present the general structure of our manifold invariant, and explain how it is expressed in terms of Grassmann--Berezin calculus of anticommuting variables,
 \item in Section~\ref{s:33}, we show how it follows from the `local' Grassmann-algebraic relation 3--3 that our proposed `global' invariant is indeed invariant under moves 3--3. This is a general theorem, where some important quantities, called~$\eta_u$, remain, at the moment, unspecified,
 \item in Section~\ref{s:oF}, we provide some formulas needed for proofs of two theorems in the next Section~\ref{s:02}. These formulas have also an algebraic beauty of their own,
 \item in Section~\ref{s:02}, we consider Grassmann-algebraic realizations of the mentioned moves 0--2. These turn out to have a simple and elegant form, involving Grassmann delta functions. Moreover, it turns out that they produce expressions for the mentioned quantities~$\eta_u$,
 \item and in Section~\ref{s:f}, we prove that our invariant---built initially using a 2-cocycle~$\omega$ on a PL manifold~$M$---depends actually only on the cohomology class $h\ni \omega$. Also, we explain its independence from such things as the signs of square roots~$K_t$ (see~(I.36)) that appeared in our calculations. We thus have (assuming our Conjecture) indeed an invariant of the pair $(M,h)$, as was promised in the Introduction to Part~I~\cite{part-I} of this work.
\end{itemize}

\section{Pachner moves and some useful decompositions of them}\label{s:P}

\subsection{Pachner moves}\label{ss:P}

A four-dimensional Pachner move replaces a cluster of pentachora in a manifold triangulation by another cluster occupying the same place in the triangulation. As already said in the Introduction, we call the initial and final clusters of a move its left- and right-hand side, respectively. There are five kinds of Pachner moves in four dimensions.

In this Subsection, pentachora and other simplices are determined by their vertices. We will describe Pachner moves using a fixed numeration of vertices, and we will be using this numeration throughout this paper. All pentachora must be \emph{oriented} consistently; by default, the orientation of a pentachoron corresponds to the order of its vertices, or, if it has the opposite orientation, this is marked by a wide tilde above, as in `pentachoron~$\widetilde{12346}$'.

\begin{remark}
Starting from Subsection~\ref{ss:24}, it will be convenient for us to consider triangulations in a broader sense, where there may be more than one simplex with the same vertices. We will be using tildes or primes to distinguish between such simplices.
\end{remark}

\paragraph{Move 3--3} transforms a cluster of three pentachora into a different cluster, also of three pentachora, as follows:
\begin{equation}\label{P33}
12345, \widetilde{12346}, 12356 \rightarrow 12456, \widetilde{13456}, 23456.
\end{equation}
Also, the l.h.s.\ of this move has 2-face~$123$ not present in the r.h.s., while the r.h.s.\ has 2-face~$456$ not present in the l.h.s. The two sides of this move differ also in their \emph{inner} tetrahedra: these are $1234$, $1235$ and~$1236$ in the l.h.s., and $1456$, $2456$ and~$3456$ in the r.h.s. All this information will be included in our algebraic realization~\eqref{33} of this move.

\paragraph{Move 2--4} transforms a cluster of two pentachora into a cluster of four pentachora, as follows:
\begin{equation}\label{P24}
\widetilde{13456}, 23456 \rightarrow 12345, \widetilde{12346}, 12356, \widetilde{12456}.
\end{equation}
One more (kind of) Pachner move is the inverse to~\eqref{P24}.

\paragraph{Move 1--5} transforms just one pentachoron into a cluster of five pentachora, as follows:
\begin{equation}\label{P15}
23456 \rightarrow 12345, \widetilde{12346}, 12356, \widetilde{12456}, 13456.
\end{equation}
One more (kind of) Pachner move is the inverse to~\eqref{P15}.

\bigskip

There are some decompositions of moves 2--4 and 1--5 whose importance will be seen in Section~\ref{s:02}. Namely, we will represent move 2--4 as a composition of what we call `first move 0--2' and Pachner move 3--3. This is explained in Subsection~\ref{ss:24}. Similarly, Pachner move 1--5 will be expressed, in Subsection~\ref{ss:15}, as a composition of `second move 0--2' and Pachner move 2--4.

Hence, in order to prove that a quantity is a PL manifold invariant, it is enough to prove its invariance under moves 3--3 and the mentioned two kinds of moves 0--2.

\subsection{Inflating two adjacent tetrahedra into a four-dimensional pillow, and Pachner move 2--4}\label{ss:24}

\begin{dfn}\label{dfn:1}
The \emph{first move 0--2} is defined as follows. Consider two tetrahedra $1456$ and~$2456$ having the common 2-face~$456$. We are going to glue to them two pentachora with opposite orientations, that is, $\widetilde{12456}$ and $12456$. After gluing the first of these, the ``free part'' of its boundary consists of three tetrahedra $1245$, $1246$ and~$1256$, and then we glue the second pentachoron to these three tetrahedra, thus obtaining a ``pillow'' whose boundary consists of two copies of tetrahedron~$1456$ and two copies of~$2456$, and having the inner edge~$12$.
\end{dfn}

Consider now the left-hand side of Pachner move 2--4~\eqref{P24} consisting of pentachora $\widetilde{13456}$ and~$23456$. We inflate the part $1456 \cup 2456$ of its boundary into the pillow described above, and get the pentachora $\widetilde{12456}$, $12456$, $\widetilde{13456}$ and~$23456$. Then we do the \emph{inverse} to move~\eqref{P33} on the last three of them, and obtain the r.h.s.\ of~\eqref{P24}.

It will also be important for us what happens with \emph{2-faces} when the pillow from Definition~\ref{dfn:1} is inserted into a triangulation instead of just two tetrahedra $1456$ and $2456$. It can be checked that there appear three new 2-faces, namely $123$, $124$ and~$125$, while the face~$456$ is \emph{doubled}. This information (about 2- as well as 3-faces) will be naturally included into our algebraic realization \eqref{P}, \eqref{pw24} of the first move 0--2.

\subsection{Inflating a single tetrahedron into a four-dimensional pillow, and Pachner move 1--5}\label{ss:15}

\begin{dfn}\label{dfn:2}
The \emph{second move 0--2} is defined as the inflation of a single tetrahedron~$3456$ into the pillow made of pentachora $13456$ and $\widetilde{13456}$ glued along four pairs of their same-name 3-faces containing new vertex~$1$.
\end{dfn}

Consider now the l.h.s.\ of~\eqref{P15}, i.e., the pentachoron~$23456$, and inflate its 3-face~$3456$ this way. We get the pentachora $13456$, $\widetilde{13456}$ and~$23456$. Then we apply the move 2--4~\eqref{P24} to the last two of them, and get the r.h.s.\ of~\eqref{P15}.

\section{Structure of the invariant}\label{s:u}

Here, in Subsection~\ref{ss:tor}, we describe our invariant as the \emph{square root of exotic torsion with a correcting multiplier}. In fact, we give it almost full definition below in formula~\eqref{inv}, where we leave undefined (until Subsection~\ref{ss:02_24}) only factors~$\eta_u$ attached to all pentachora~$u$. Our tool for studying the behavior of our invariant under Pachner moves will be Grassmann--Berezin calculus of anticommuting variables, so then, in Subsection~\ref{ss:igb}, we rewrite our invariant in terms of this calculus.

\subsection{Invariant: square root of exotic torsion with a correcting multiplier}\label{ss:tor}

We want to use the \emph{torsion} of chain complex~(I.11) in the construction of our invariant. So, we adopt the following assumption.

\begin{mpt}\label{mpt:e}
The complex~(I.11) is \emph{acyclic}.
\end{mpt}

\begin{remark}
Assumption~\ref{mpt:e} does hold for some manifolds, namely, for instance, for sphere~$S^4$. Note, however, that a torsion may be constructed even if Assumption~\ref{mpt:e} fails, see~\cite[Subsection~3.1]{Turaev}.
\end{remark}

Recall that all linear spaces in complex~(I.11) are equipped with distinguished bases, hence all linear mappings are identified with their matrices. According to the general theory~\cite[Subsection~2.1]{Turaev}, the torsion of complex~(I.11) is
\begin{equation}\label{tauu}
\tau=\frac{\minor f_1\cdot \minor f_3\cdot \minor f_5}{\minor f_2\cdot \minor f_4},
\end{equation}
where the minors are chosen as follows. For each of the six nonzero linear spaces in~(I.11), a subset~$\mathfrak b_i$ is taken of its basis, $i=1,\ldots,6$, and these subsets must satisfy the following conditions:
\begin{itemize}\itemsep 0pt
 \item $\mathfrak b_1$ is the empty set,
 \item $\mathfrak b_{i+1}$ contains the same number of basis vectors as $\overline{\mathfrak b_i}$ (the complement to the $i$-th subset),
 \item submatrices of $f_i$, \ $i=1,\ldots,5$, whose rows correspond to $\mathfrak b_{i+1}$ and columns to $\overline{\mathfrak b_i}$, are nondegenerate.
\end{itemize}
Such subsets~$\mathfrak b_i$ always exist for an acyclic complex, and minors in~\eqref{tauu} are by definition the determinants of the mentioned submatrices of $f_i$.

\begin{remark}
$\minor f_1$ is of course just one matrix element of one-column matrix~$f_1$; similar statement applies to $\minor f_5$ as well.
\end{remark}

Using the symmetry of complex~(I.11) (see the text right after formula~(I.11)), we can choose all the minors in~\eqref{tauu} in a symmetric way, namely, so that the rows used in $\minor f_1$ have the same numbers as the columns in $\minor f_5$; similarly for $f_2$ and~$f_4$; and the submatrix of~$f_3$ used for the minor will involve rows and columns with the same (subset of) numbers and will thus be skew-symmetric, like~$f_3$ itself. Using also the notion of \emph{Pfaffian}, we can write:
\begin{equation}\label{rt}
\sqrt{\tau} = \frac{\minor f_1\cdot \Pfaffian(\mathrm{submatrix\;of\;} f_3)}{\minor f_2}.
\end{equation}

We will show that a PL manifold invariant can be obtained from our quantity~\eqref{rt} if it is multiplied by a correcting factor, introduced to ensure the invariance under Pachner moves. Namely, our invariant will be
\begin{equation}\label{inv}
I = \prod_{\substack{\text{all}\\ \!\!\!\!\!\text{2-faces }s\!\!\!\!\!}} q_s^{-1} \prod_{\substack{\text{all}\\ \!\!\!\!\!\text{pentachora }u\!\!\!\!\!}} \eta_u \cdot \sqrt{\tau}.
\end{equation}
Recall~(I.34) that $q_s=\sqrt{\omega_s}$ are square roots of 2-cocycle~$\omega$'s values. Likewise, each quantity~$\eta_u$, attached to pentachoron~$u$, will be expressed algebraically in terms of values~$\omega_s$ for $s\subset u$, but the exact definition of~$\eta_u$ needs some preparational work; it will appear in Subsection~\ref{ss:02_24} as formula~\eqref{eta}.

Formula~\eqref{inv} implies that the triangulation vertices have been ordered, see Convention~I.1 (and for independence of~$I$ of this ordering, see Theorem~\ref{th:oe} below). It is also implied that the values of~$q_s$ and~$K_t$~(I.36) must agree in the sense of formula~(I.40); it makes sense to reproduce it here:
\begin{equation}\label{I.40}
K_{ijkl} = q_{ijk}q_{ijl}q_{ikl}q_{jkl},
\end{equation}
for any tetrahedron $t=ijkl$ with $i<j<k<l$.

\begin{remark}\label{r:+-}
As the reader can see, formula~\eqref{inv} contains square roots and, besides, it looks not easy to specify the order of rows and/or columns in the minors in formulas like~\eqref{rt}. Below, in terms of Grassmann--Berezin calculus, we will get the same problem disguised as the choice of integration order for multiple Berezin integrals. This leads to the agreement that we consider~$I$ as determined \emph{up to a sign}\footnote{Which is not very surprising for a theory dealing with something like Reidemeister torsion.}, so we will not pay much attention to the signs in our formulas, as long as these signs can affect only the sign of~$I$. A subtler situation---namely with \emph{fourth} roots---is considered in Theorem~\ref{th:oe}. 
\end{remark}

\subsection{Invariant in terms of Grassmann--Berezin calculus}\label{ss:igb}

In order to study the behavior of quantity~\eqref{rt} under Pachner moves, it makes sense to express it in terms of Grassmann--Berezin calculus. Recall (Definition~I.1) that our Grassmann algebras are over field~$\mathbb C$, hence `linear' means `$\mathbb C$-linear', etc. As we will see in formulas \eqref{33}, \eqref{P} and~\eqref{pw15}, the algebraic operation corresponding to gluing pentachora together will be, in our construction, the \emph{Berezin integral}, so we recall now its definition.

\begin{dfn}\label{dfn:Bi}
\emph{Berezin integral}~\cite{B1,B2} with respect to a generator~$\vartheta$ of a Grassmann algebra~$\mathcal A$ is a linear functional
\[
f\mapsto \int f\, \mathrm d\vartheta
\]
on~$\mathcal A$ defined as follows. As $\vartheta^2=0$, any algebra element~$f$ can be represented as $f=f_0+f_1\vartheta$, where none of $f_0$ and~$f_1$ contain~$\vartheta$. By definition,
\[
\int f\, \mathrm d\vartheta = f_1.
\]
\emph{Multiple integral} is defined as the iterated one:
\[
\iint f\, \mathrm d\vartheta_1\, \mathrm d\vartheta_2 = \int \left(\int f\, \mathrm d\vartheta_1\right) \mathrm d\vartheta_2.
\]
\end{dfn}

We will need only a few simple properties of Berezin integral; we recall them when necessary. Right now, we mention the following two properties:
\begin{itemize} \itemsep 0pt
 \item Berezin integral looks very much like the (left) \emph{derivative} (see Definition~I.3). In fact, there is an operator in Grassmann algebras called \emph{right derivative} that coincides exactly with Berezin integral. Also, some authors simply define the Berezin integral to be the same as the \emph{left} derivative; we will, however, stick to the traditional Definition~\ref{dfn:Bi}. Anyhow, the results of integration and differentiation of either \emph{even} or \emph{odd} (all monomials have even or odd degrees, respectively) Grassmann algebra element can differ at most by a sign---and we will make use of this fact when proving Lemma~\ref{l:GB} below,
 \item integral of a Grassmann--Gaussian exponential is
\begin{equation}\label{pfi}
\int \exp(\uptheta^{\mathrm T}A\,\uptheta) \prod_{i=1}^n \mathrm d\vartheta_i = (-2)^{n/2} \Pfaffian A,
\end{equation}
where $A$ is a skew-symmetric matrix (with entries in~$\mathbb C$), and $\uptheta$ is a column of Grassmann generators: $\uptheta=\begin{pmatrix}\vartheta_1 & \dots & \vartheta_n\end{pmatrix}^{\mathrm T}$.
\end{itemize}

Recall now that in Subsection~I.4.2 we have put a Grassmann variable (generator)~$\vartheta_t$ in correspondence to each tetrahedron~$t$ in the triangulation. Below ``Grassmann algebra'' means, by default, the algebra generated by all these~$\vartheta_t$. Moreover, we have introduced, in Subsection~I.6.1, the column~$\Uptheta=\begin{pmatrix} \vartheta_1 & \ldots & \vartheta_{N_3} \end{pmatrix}^{\mathrm T}$  made of all these~$\vartheta_t$.

We now also introduce the \emph{row} consisting of all (left) \emph{differentiations} with respect to variables~$\vartheta_t$:
\begin{equation}\label{dt}
\mathbf D=\begin{pmatrix} \partial_1 & \ldots & \partial_{N_3} \end{pmatrix}, \quad \text{where \ } \partial_t=\frac{\partial}{\partial \vartheta_t}.
\end{equation}
Consider the product $\mathbf D f_2$, where elements of both row~$\mathbf D$ and matrix~$f_2$ are understood as $\mathbb C$-linear operators acting in our Grassmann algebra, that is, differentiations and multiplications by constants, respectively. This way, $\mathbf D f_2$ makes a row of differential operators. Recall now that columns of~$f_2$---and thus elements of row~$\mathbf D f_2$---correspond to basis elements in $Z^2(M,\mathbb C)$---the second space in complex~(I.11), and these include elements corresponding to edges in a set~$\mathsf B$ whose complement~$\overline{\mathsf B}$ makes a maximal tree in the 1-skeleton of our triangulation, see the paragraph right between Remarks I.3 and~I.4. We call the element in~$\mathbf D f_2$, corresponding this way to an edge~$b\in \mathsf B$, \emph{global edge operator} corresponding to~$b$, and denote it~$D_b$. Clearly, $D_b$ does not depend on a specific~$\mathsf B$, as long as $\mathsf B\ni b$.

\begin{remark}
Row~$\mathbf D f_2$ includes also elements corresponding to a pullback of some chosen basis in $H^2(M,\mathbb C)$, see the paragraph right before Remark~I.2.
\end{remark}

\begin{remark}
In the proof of Theorem~I.7, we denoted~$D_c$ the differential operator corresponding to a \emph{2-cochain}~$c$. In the notations of that proof, our operators~$D_b$ correspond to $c=\delta b$ and should be called~$D_{\delta b}$. Hopefully, our less pedantical notations will bring no confusion.
\end{remark}

Denote now the submatrix of matrix~$f_2$ consisting exactly of its columns used in $\minor f_2$ in~\eqref{rt} as~$\tilde f_2$ (these are of course all columns except one whose number is the same as the number of row used in $\minor f_1$), and consider the product $\mathbf D \tilde f_2$, consisting of all operators in~$\mathbf D f_2$ except one. We denote by $\boldsymbol{\partial}$ the \emph{product of all operators in~$\mathbf D \tilde f_2$}.

\begin{lemma}\label{l:GB}
In the notations of formula~\eqref{rt},
\begin{equation}\label{23GB}
\frac{\Pfaffian(\mathrm{submatrix\;of\;} f_3)}{\minor f_2} = 2^{-m_3/2} \idotsint \boldsymbol{\partial}^{-1} 1 \cdot \exp(\Uptheta^{\mathrm T} f_3 \,\Uptheta) \prod_{\substack{\mathrm{all}\\ \!\!\!\!\mathrm{tetrahedra\; }t\!\!\!\!}} \mathrm d\vartheta_t,
\end{equation}
where $\boldsymbol{\partial}^{-1} 1$ is \emph{any} such Grassmann algebra element~$w$ that $\boldsymbol{\partial}w = 1$, and $m_3$ is the size of the $\mathrm{submatrix\;of\;} f_3)$.
\end{lemma}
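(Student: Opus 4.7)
The plan is to introduce a linear change of Grassmann generators that turns $\boldsymbol{\partial}$ into a product of pure coordinate derivatives, after which~\eqref{23GB} reduces to the standard Grassmann--Gaussian formula~\eqref{pfi} applied to a block of~$f_3$. The chain--complex identity $f_3 f_2 = 0$ makes the change of variables especially clean and, in particular, justifies the claim that the integral does not depend on the specific choice of~$w$.

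Concretely, I would reorder the tetrahedra so that $\mathfrak b_3 = \{1,\dots,k\}$ (with $k = |\mathfrak b_3| = |\overline{\mathfrak b_2}|$), write $\tilde f_2 = \left(\begin{smallmatrix} A\\ C\end{smallmatrix}\right)$ with $k\times k$ invertible top block $A$ of determinant $\pm\minor f_2$, and set $\tilde{\Uptheta} = B\Uptheta$ with
\[
B = \begin{pmatrix} A^{-1} & 0 \\ -CA^{-1} & I \end{pmatrix}, \qquad B^{-1} = \begin{pmatrix} A & 0 \\ C & I \end{pmatrix}.
\]
Then $B\tilde f_2 = \left(\begin{smallmatrix} I_k\\ 0\end{smallmatrix}\right)$, $\det B = \pm(\minor f_2)^{-1}$, and, since $\mathbf D = \tilde{\mathbf D}\,B$, the row $\mathbf D\,\tilde f_2$ simplifies to $(\tilde\partial_1,\dots,\tilde\partial_k)$; thus $\boldsymbol{\partial} = \pm\tilde\partial_1\cdots\tilde\partial_k$ and one may take $\boldsymbol{\partial}^{-1}1 = \pm\tilde\vartheta_k\cdots\tilde\vartheta_1$. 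Setting $\tilde f_3 = (B^{-1})^{\mathrm T}f_3 B^{-1}$, the identity $f_3 f_2 = 0$ becomes $\tilde f_3(B\tilde f_2) = 0$, forcing the first $k$ columns, and hence by skew-symmetry the first $k$ rows, of $\tilde f_3$ to vanish; a direct block computation identifies the remaining lower-right block $\tilde f_3^{(\bar I\bar I)}$ with $(f_3)_{\overline{\mathfrak b_3},\overline{\mathfrak b_3}}$, i.e., the symmetric submatrix of~$f_3$ whose Pfaffian appears in~\eqref{23GB}.

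The right-hand side of~\eqref{23GB}, after the change of variables (whose Berezin Jacobian is $\det B$, the reciprocal of the ordinary one), becomes
\[
\pm\,2^{-m_3/2}\det B\int \tilde\vartheta_k\cdots\tilde\vartheta_1\cdot \exp\bigl(\tilde\Uptheta_{\bar I}^{\mathrm T}\tilde f_3^{(\bar I\bar I)}\tilde\Uptheta_{\bar I}\bigr)\prod_t\mathrm d\tilde\vartheta_t.
\]
Because the exponential involves none of $\tilde\vartheta_1,\dots,\tilde\vartheta_k$, the $k$ integrations over these variables are absorbed entirely by the prefactor $\tilde\vartheta_k\cdots\tilde\vartheta_1$; and any additional summand of $w$ missing even one of $\tilde\vartheta_1,\dots,\tilde\vartheta_k$ contributes zero to the integral, which is precisely the $w$-independence asserted in the lemma. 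Evaluating the remaining $m_3$-dimensional Gaussian integral by~\eqref{pfi} gives $(-2)^{m_3/2}\Pfaffian(\mathrm{submatrix\ of\ }f_3)$, and substituting $\det B = \pm(\minor f_2)^{-1}$ recovers $\pm\Pfaffian(\mathrm{submatrix\ of\ }f_3)/\minor f_2$, matching the left-hand side of~\eqref{23GB} up to the sign permitted by Remark~\ref{r:+-}.

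The main obstacle is purely bookkeeping: the Berezin Jacobian is the reciprocal of the ordinary one, signs arise from permuting Grassmann generators as well as from the factor $(-1)^{m_3/2}$ obtained by combining $(-2)^{m_3/2}$ with $2^{-m_3/2}$, and one must be careful to distinguish the change-of-basis formulas for $f_2$ (pre-multiplication by $B$) and $f_3$ (conjugation by $(B^{-1})^{\mathrm T}$). Conceptually, the argument is nothing more than a simultaneous block-diagonalization of $\tilde f_2$ and $\tilde f_3$ followed by~\eqref{pfi}.
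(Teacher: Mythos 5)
Your proof is correct, but it is organized differently from the paper's. The paper first exhibits one concrete choice of $\boldsymbol{\partial}^{-1}1$, namely the single monomial $\prod_{t\in\mathfrak b_3}\vartheta_t/\minor f_2$ in the \emph{original} generators, verifies \eqref{23GB} for that choice by a direct appeal to \eqref{pfi}, and then proves independence of the choice abstractly: each operator in $\mathbf D\tilde f_2$ annihilates $\exp(\Uptheta^{\mathrm T}f_3\Uptheta)$ (a fact imported from the proof of Theorem~I.7), so the Leibniz rule gives $\boldsymbol{\partial}\bigl(w_0\exp(\Uptheta^{\mathrm T}f_3\Uptheta)\bigr)=0$ whenever $\boldsymbol{\partial}w_0=0$, and the top-degree Berezin integral factors through $\boldsymbol{\partial}$. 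You instead make a global linear change of Grassmann generators that normalizes $\tilde f_2$ to $\left(\begin{smallmatrix}I_k\\0\end{smallmatrix}\right)$ and, via $f_3f_2=0$, kills the first $k$ rows and columns of the transformed $f_3$, after which both the evaluation and the $w$-independence are visible by inspection of monomials. The two arguments rest on the same underlying fact in two guises --- your matrix identity $f_3f_2=0$ is exactly equivalent to the paper's differential identity $D_c\exp(\Uptheta^{\mathrm T}f_3\Uptheta)=0$ --- but your version buys explicitness (the lower-right block is identified concretely with $(f_3)_{\overline{\mathfrak b_3},\overline{\mathfrak b_3}}$ and all signs are traceable through $\det B=\pm(\minor f_2)^{-1}$), at the cost of the row-reordering and Jacobian bookkeeping you acknowledge; the paper's version avoids any change of basis and generalizes more readily, since it never needs $\boldsymbol{\partial}$ to be a product of coordinate derivatives. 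All the discrepancies are confined to overall signs, which Remark~\ref{r:+-} explicitly discards, so your argument establishes the lemma as stated.
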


\begin{proof}
It is not hard to see that $\boldsymbol{\partial}^{-1} 1$ can be chosen as one monomial
\[
\boldsymbol{\partial}^{-1} 1 = \frac{\prod_{t\in\mathfrak b_3} \vartheta_t}{\minor f_2},
\]
where we use the notations introduced after formula~\eqref{tauu}; in other words, the product in the numerator goes over the tetrahedra corresponding to those rows of~$f_2$ that enter in~$\minor f_2$. With this choice, a small exercise in Grassmann--Berezin calculus using formula~\eqref{pfi} shows that \eqref{23GB} holds indeed.

What remains is to show that the r.h.s.\ of~\eqref{23GB} does \emph{not} depend on a choice of $\boldsymbol{\partial}^{-1} 1$. In other words, we must show that if $w_0$ is such that $\boldsymbol{\partial} w_0 =0$, then
\begin{equation}\label{w0}
\idotsint w_0 \exp(\Uptheta^{\mathrm T} f_3 \,\Uptheta) \prod_{\substack{\mathrm{all}\\ \!\!\!\!\mathrm{tetrahedra\; }t\!\!\!\!}} \mathrm d\vartheta_t = 0.
\end{equation}
First, we note that we defined $\boldsymbol{\partial}$ as a product of differential operators---linear combinations of differentiations w.r.t.\ Grassmann generators~$\vartheta_t$---each of which annihilates $\exp(\Uptheta^{\mathrm T} f_3 \,\Uptheta)$, as a formula proved within the proof of Theorem~I.7 tells us; as it happens to be unnumbered, it makes sense to repeat it here:
\[
D_c \exp(\Uptheta^{\mathrm T} f_3 \,\Uptheta) = D_c \prod_u \mathcal W_u = 0.
\]
Using Leibniz rule~(I.3) repeatedly, we deduce (from the fact that the operators annihilate the exponential) that
\begin{equation}\label{d0}
\boldsymbol{\partial} \bigl(w_0 \exp(\Uptheta^{\mathrm T} f_3 \,\Uptheta)\bigr) = 0.
\end{equation}

Finally, the multiple integral in~\eqref{w0} is equivalent to ($\pm$) the product of anticommuting differentiations~$\partial_t$ for all tetrahedra~$t$. Following our Subsection~I.2.2, we denote $U^*=\linearspan_{\mathbb C}\{\partial_t\}$ the linear space generated by these  differentiations. On the other hand, $\boldsymbol{\partial}$ is the product of some linearly independent linear combinations~$D_c$ of~$\partial_t$. Enlarging the set of these~$D_c$ to a basis in~$U^*$ by adding the necessary number of new operators, and denoting the product of these latter as~$\boldsymbol D$, we see that the integration in~\eqref{w0} is equivalent to $\const\cdot \boldsymbol{\partial D}$. Hence, \eqref{w0} immediately follows from~\eqref{d0}.
\end{proof}

\section{Grassmann-algebraic relation 3--3: how it works within a triangulation}\label{s:33}

In this Section we, while still not specifying what exactly the quantities~$\eta_u$ are in formula~\eqref{inv}, prove the following theorem showing that the `local' relation~\eqref{33} implies that the `global' quantity is indeed invariant \emph{under moves 3--3}. Relation~\eqref{33} corresponds to the move 3--3 as it was described in Subsection~\ref{ss:P}.

\begin{remark}
The explicit form of~$\eta_u$ satisfying~\eqref{33} will appear very naturally in Section~\ref{s:02} while examining the moves 0--2 introduced in Subsections \ref{ss:24} and~\ref{ss:15}. The proof of~\eqref{33}, with these~$\eta_u$, turns out to be a specific algebraic problem, as we already said in the Introduction; see Conjecture in Subsection~\ref{ss:ER}.
\end{remark}

\begin{theorem}\label{th:33}
Let pentachora $12345$, $\widetilde{12346}$ and~$12356$ be contained in the triangulation of manifold~$M$, and consider move 3--3~\eqref{P33} done on them. Suppose the following Grassmann-algebraic realization of this move holds:
\begin{multline}\label{33}
\frac{\eta_{12345}\, \eta_{12346}\, \eta_{12356}}{q_{123}} \iiint \mathcal W_{12345} \widetilde{\mathcal W}_{12346} \mathcal W_{12356} \,\mathrm d\vartheta_{1234} \,\mathrm d\vartheta_{1235} \,\mathrm d\vartheta_{1236} \\
= \frac{\eta_{12456}\, \eta_{13456}\, \eta_{23456}}{q_{456}} \iiint \mathcal W_{12456} \widetilde{\mathcal W}_{13456} \mathcal W_{23456} \,\mathrm d\vartheta_{1456} \,\mathrm d\vartheta_{2456} \,\mathrm d\vartheta_{3456} \,,
\end{multline}
where we write $\widetilde{\mathcal W}_{12346}$ instead of $\mathcal W_{\widetilde{12346}}$, etc.

Then, quantities~I~\eqref{inv} calculated for the initial and final triangulations are the same.
\end{theorem}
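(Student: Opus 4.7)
The approach is to put both sides of the desired equality $I_{\mathrm{lhs}}=I_{\mathrm{rhs}}$ into the Grassmann--Berezin form supplied by Lemma~\ref{l:GB}, factor out every contribution that the 3--3 move does not touch, and recognise what remains as exactly the local identity~\eqref{33}.

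Combining \eqref{inv}, \eqref{rt} and Lemma~\ref{l:GB}, up to sign and up to the overall factor $2^{-m_3/2}$ one has
\begin{equation*}
I \;=\; \minor f_1\cdot \prod_{s} q_s^{-1}\cdot \prod_{u}\eta_u\cdot \idotsint \boldsymbol\partial^{-1}1\,\cdot\,\prod_u \mathcal W_u\,\prod_{t}\mathrm d\vartheta_t,
\end{equation*}
where the Grassmann weight of the whole triangulation factorises pentachoron by pentachoron, $\exp(\Uptheta^{\mathrm T}f_3\,\Uptheta)=\prod_u\mathcal W_u$. Now the effect of a 3--3 move on the data entering this expression is very localised: all vertices and all edges are preserved, so $\minor f_1$ (a single matrix element built only from vertex/edge data) may be taken identical on the two sides; all tetrahedra are preserved apart from the three inner ones of the cluster, so all but three Berezin measures $\mathrm d\vartheta_t$ are shared; all pentachora are preserved apart from the three listed on each side of~\eqref{P33}; and among the 2-faces only $123$ is removed on the l.h.s.\ and only $456$ is added on the r.h.s., so $\prod_s q_s^{-1}$ differs between the two sides only by $q_{123}^{-1}$ versus $q_{456}^{-1}$.

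Next I would use the freedom granted by Lemma~\ref{l:GB} and pick on each side the monomial representative $\boldsymbol\partial^{-1}1=(\prod_{t\in\mathfrak b_3}\vartheta_t)/\minor f_2$ from the proof of that lemma, with the subset $\mathfrak b_3$ chosen to consist entirely of outer tetrahedra and with $\mathfrak b_2$ (columns of $f_2$) taken identically on the two sides. Then $\boldsymbol\partial^{-1}1$ involves no inner $\vartheta$'s, and the Berezin integrations split: the three inner integrations act only on $\prod_{u\text{ inner}}\mathcal W_u$, since the outer pentachoron weights do not contain any inner $\vartheta$, and this inner triple integral is exactly what appears in~\eqref{33}. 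The remaining outer integrations combine $\boldsymbol\partial^{-1}1$, the outer weights $\prod_{u\text{ outer}}\mathcal W_u$ and the outcome of the inner integration into an expression that is manifestly the same on the two sides, provided the inner triple integrals are equal up to the prefactor $\bigl(\prod_{u\text{ inner}}\eta_u\bigr)\cdot q_{\text{changing face}}^{-1}$. That equality is precisely the hypothesis~\eqref{33}, whence $I_{\mathrm{lhs}}=I_{\mathrm{rhs}}$.

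The main obstacle I anticipate is the compatibility step: one must verify that $\mathfrak b_3$ can indeed be taken inside the set of outer tetrahedra (that its cardinality is not forced to include inner ones), and that with such a choice the submatrix of $f_2$ used for $\minor f_2$ has the same determinant on both sides of the move. This is a purely combinatorial compatibility check on the chain complex~(I.11), resting on the fact that the rows of $f_2$ indexed by outer tetrahedra, and the columns indexed by $\mathfrak b_2$, are unaffected by the move. Modulo this check, the remaining $\pm$ coming from reordering Berezin integrations is absorbed by the sign ambiguity recorded in Remark~\ref{r:+-}.
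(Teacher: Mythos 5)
Your overall strategy coincides with the paper's: pass to the Berezin-integral form of Lemma~\ref{l:GB}, choose $\boldsymbol{\partial}^{-1}1$ so that it contains none of $\vartheta_{1234},\vartheta_{1235},\vartheta_{1236}$, pull everything untouched by the move out from under the inner triple integration, perform that integration first, and invoke~\eqref{33}. However, the step you defer as an ``anticipated obstacle''---that $\mathfrak b_3$ can be taken inside the set of outer tetrahedra, i.e.\ that a monomial $\boldsymbol{\partial}^{-1}1$ free of the three inner variables exists---is not a side issue: it is essentially the entire content of the paper's proof, and your proposed way of settling it does not work. You call it ``a purely combinatorial compatibility check \ldots resting on the fact that the rows of $f_2$ indexed by outer tetrahedra \ldots are unaffected by the move.'' But the question is whether the submatrix of $f_2$ obtained by discarding the three inner rows still contains a nondegenerate square submatrix of the required size, and that is a nondegeneracy statement about $f_2$ (whose entries depend on the cocycle $\omega$), not a combinatorial fact about which rows are touched by the move; invariance of the outer rows under the move says nothing about it.

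The paper closes this gap by a Grassmann-algebraic argument: every operator in $\mathbf D\tilde f_2$ annihilates $\mathbf W=\prod_u\mathcal W_u=\exp(\Uptheta^{\mathrm T}f_3\,\Uptheta)$ (the identity $D_c\,\mathbf W=0$ from the proof of Theorem~I.7), whereas $\partial_{1234}\partial_{1235}\partial_{1236}\mathbf W\ne 0$; hence the linear span of the operators in $\mathbf D\tilde f_2$ meets $\linearspan\{\partial_{1234},\partial_{1235},\partial_{1236}\}$ only in zero. Writing the coefficient matrix $C$ of these operators with the three inner tetrahedra placed in the last columns and reducing to echelon form, no pivot can land in those last three columns, and the product of the $\vartheta_t$ sitting at the pivots gives the desired monomial $\boldsymbol{\partial}^{-1}1$. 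Once this is supplied, the rest of your argument (identification of the unchanged factors $\minor f_1$, the common $q_s$ and $\eta_u$, the outer measures, and absorption of reordering signs into Remark~\ref{r:+-}) goes through as you describe and matches the paper. So: right route, but the one step you left unproved is the theorem's real mathematical content, and it needs the annihilation property of the edge operators rather than combinatorics.
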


\begin{proof}
Due to Lemma~\ref{l:GB}, it is enough to show that the multiple integral in the right-hand side of~\eqref{23GB} for the initial or final triangulation can be obtained by making first triple integration in the respective side of~\eqref{33}, and then further integrating in the rest of variables~$\vartheta_t$. Consider the initial triangulation for definiteness.

Consider differentiations $\partial_{1234}$, $\partial_{1235}$ and~$\partial_{1236}$; their product is, up to a possible sign, the same operation as the triple integration in the left-hand side of~\eqref{33}; the sign may appear because our differentiations are, by default, \emph{left}, while integral corresponds to \emph{right} differentiations. Let
\[
\mathbf W=\prod_u \mathcal W_u=\exp(\Uptheta^{\mathrm T} f_3 \,\Uptheta)
\]
be the product of pentachoron weights~(I.16) over all triangulation. Then $\partial_{1234} \partial_{1235} \partial_{1236} \mathbf W \ne 0$.

On the other hand, $\boldsymbol{\partial}$ is the product of differentiations in~$\mathbf D \tilde f_2$, each of which annihilates~$\mathbf W$---see the proof of Theorem~I.7. It follows then that only zero is contained in the intersection of the linear space spanned by differentiations in~$\mathbf D \tilde f_2$ with the linear space spanned by $\partial_{1234}$, $\partial_{1235}$ and~$\partial_{1236}$. It follows further that $\boldsymbol{\partial}^{-1}1$ can be chosen as a single Grassmann monomial containing none of $\vartheta_{1234}$, $\vartheta_{1235}$ and~$\vartheta_{1236}$.

Indeed, make the following matrix~$C$: its columns correspond to all tetrahedra~$t$ in the triangulation, while the $i$-th row consists of the coefficients of~$\partial_t$ in the $i$-th operator in~$\mathbf D \tilde f_2$. Let also tetrahedra $1234$, $1235$ and~$1236$ correspond to the three \emph{last} columns. Reduce then matrix~$C$ to the \emph{echelon} form; the leading element in any row cannot then belong to the three last columns---and the desired monomial can be obtained as the product of variables~$\vartheta_t$ corresponding to these exactly leading elements, with some coefficient.

Now we see that $\boldsymbol{\partial}^{-1}1$ can be chosen not to contain the integration variables in~\eqref{33}, and pentachoron weights~$\mathcal W_u$ for $u$ not present in~\eqref{33} do not contain them either. This means that all these factors can be taken out from under the integration in~\eqref{33}, and this latter can be performed first, as we wanted to show.
\end{proof}

\section[Edge operators and cocycle~$\omega$ from matrix~$F$]{Edge operators and cocycle~$\boldsymbol{\omega}$ from matrix~$\boldsymbol{F}$}\label{s:oF}

In this Section, we write out some formulas needed for proofs of Theorems \ref{th:eta} and~\ref{th:e2} below. Apart from that need, these formulas have their own intriguing algebraic beauty.

We work here within one fixed pentachoron~$u$. Let $b$ and~$t$ denote an edge and a tetrahedron such that $b\subset t\subset u$. Recall that the (local) edge operator has the following form\footnote{\eqref{I.14c} reproduces our formula~(I.14), except for one misprint in the latter: `$\partial_{bt}$' in~(I.14) must be read as~`$\partial_t$'.}:
\begin{equation}\label{I.14c}
d_b = d_b^{(u)} = \sum_{\substack{t\subset u\\ t\supset b}} (\beta_{bt}\partial_t + \gamma_{bt}\vartheta_t),
\end{equation}
and the coefficients in it can be calculated, up to a scalar factor---we call it~$h_b$---directly from the matrix~$F$ corresponding to~$u$. We have already done this calculation in~\cite[formula~(20)]{full-nonlinear}; here we write the result in a slightly different form, namely:
\begin{equation}\label{hb}
\beta_{bt}=h_b \tilde{\beta}_{bt},\qquad \gamma_{bt}=h_b \tilde{\gamma}_{bt},
\end{equation}
where
\begin{align}
\tilde{\beta}_{bt} &= F_{ik}F_{jl}-F_{il}F_{jk}, \label{bF} \\
\tilde{\gamma}_{bt} &= F_{ik}F_{jm}F_{lm}-F_{im}F_{jk}F_{lm}-F_{il}F_{jm}F_{km}+F_{im}F_{jl}F_{km}, \label{gF}
\end{align}
and the notations are as follows: $i,\ldots,m$ are vertices determined by the requirement that they must give our edge, tetrahedron and pentachoron, \emph{together with their respective orientations}, according to
\[
b=ij,\quad t=ijkl,\quad u=ijklm,
\]
(orientations correspond to the order of vertices; orientation of~$t$ is induced from~$u$); and $F_{ik}$ is a shorthand for the matrix element between the tetrahedra \emph{not} containing vertices $i$ and~$j$, respectively, that is,
\[
F_{ik} \stackrel{\mathrm{def}}{=} F_{jklm,ijlm}.
\]

Coefficients~$h_b$ for the ten edges $b\subset u$ must be chosen in such way as to ensure the 1-cycle relations~(I.21) for edge operators. It can be checked that such~$h_b$ are given by the following remarkable formula:
\begin{equation}\label{pf}
h_{ij} \stackrel{\mathrm{def}}{=} p\,(\tilde{\beta}_{ki,t}\,\tilde{\gamma}_{kj,t}+\tilde{\beta}_{kj,t}\,\tilde{\gamma}_{ki,t}),
\end{equation}
where the r.h.s.\ actually does \emph{not} depend on the tetrahedron $t\subset u$, and $p$ is an arbitrary scalar factor.

Even more remarkably, the values~$\omega_s$ of cocycle~$\omega$, determined by homogeneous linear equations~(I.22) (and the unnumbered formula right below~(I.22)), turn out to be proportional to the following triple products over the edges of tetrahedron~$s=ijk$:
\begin{equation}\label{hhh}
\omega_{ijk} \propto h_{ij}^{-1}h_{ik}^{-1}h_{jk}^{-1}.
\end{equation}
The proportionality coefficient in~\eqref{hhh} is then determined by the normalization Convention~I.5 for edge operators; this leads to the formula
\begin{equation}\label{ph}
\omega_{ijk} = \frac{p}{h_{ij}h_{ik}h_{jk}}.
\end{equation}

\begin{remark}
We have already met triple products of the kind~\eqref{hhh} in our elliptic parameterization~\cite[(50)]{full-nonlinear}; the corresponding matrix elements of~$F$ had the elegant form~\cite[(51)]{full-nonlinear}, or could be obtained from those by a simple `gauge transformation'.
\end{remark}

\begin{remark}
It must be stressed again that the above calculations were done within one pentachoron~$u$. In particular, the factor~$p$ in \eqref{pf} and ~\eqref{ph} may not be the same for another pentachoron.
\end{remark}

\section[Relations corresponding to moves 0--2, and the factors~$\eta_u$]{Relations corresponding to moves 0--2, and the factors~$\boldsymbol{\eta_u}$}\label{s:02}

\subsection{Grassmann delta functions}\label{ss:Gd}

Quite naturally (see Theorem~\ref{th:m1}), our Grassmann-algebraic relations corresponding to moves 0--2 introduced in Section~\ref{s:P} will involve \emph{Grassmann delta functions}. The following simple definition will suit our needs well enough.

\begin{dfn}\label{dfn:Gd}
Let $\vartheta$ and~$\vartheta'$ be two Grassmann generators. Then we introduce the Grassmann delta function as follows:
\begin{equation}\label{d}
\delta(\vartheta,\vartheta')\stackrel{\mathrm{def}}{=}\vartheta-\vartheta'.
\end{equation}
\end{dfn}

This definition is justified by the following equality:
\begin{equation}\label{de}
\int f(\vartheta) \, \delta(\vartheta,\vartheta') \,\mathrm d\vartheta = f(\vartheta').
\end{equation}
Here $f(\vartheta)$ is any Grassmann algebra element (that may contain Grassmann generator~$\vartheta$ and/or other generators), while $f(\vartheta')=f(\vartheta)[\vartheta/\vartheta']$ is the result of replacing~$\vartheta$ with~$\vartheta'$. Equality~\eqref{de} is easily checked directly if we write $f(\vartheta)$ as $f(\vartheta)=f_0+f_1\vartheta$, where neither of $f_0$ and~$f_1$ contain~$\vartheta$.

\subsection{Relation corresponding to the first move 0--2}\label{ss:02_24}

The first move 0--2 has been described in Subsection~\ref{ss:24}. When we insert two new pentachora~$12456$ in the triangulation this way, there appears one new edge, namely~$b^{\mathrm{new}}=12$. Accordingly, the set~$\mathsf B$ of edges introduced in Subsection~\ref{ss:igb} can be changed the following way:
\[
\mathsf B^{\mathrm{new}}=\mathsf B^{\mathrm{old}}\cup\{ b^{\mathrm{new}} \},
\]
the operator~$\boldsymbol{\partial}$ introduced also in Subsection~\ref{ss:igb} changes as follows:
\[
\boldsymbol{\partial}^{\mathrm{new}}=\boldsymbol{\partial}^{\mathrm{old}}D_{b^{\mathrm{new}}},
\]
and the new expression for~$\boldsymbol{\partial}^{-1} 1$ can be chosen as follows:
\begin{equation}\label{1/d}
(\boldsymbol{\partial}^{-1} 1)^{\mathrm{new}}=(\boldsymbol{\partial}^{-1} 1)^{\mathrm{old}} \mathbf w,
\end{equation}
where $\mathbf w$---it can be called \emph{Grassmann weight of the edge~$b^{\mathrm{new}}$}---is any element of the Grassmann algebra obeying
\[
D_{b^{\mathrm{new}}}\mathbf w = 1,
\]
and, in addition, depending only on Grassmann variables \emph{not present in~$(\boldsymbol{\partial}^{-1} 1)^{\mathrm{old}}$}, that is, only on variables $\vartheta_{1245}$, $\vartheta_{1246}$ and/or~$\vartheta_{1256}$ living on the newly created tetrahedra.

\begin{example}\label{xmp:w1}
For instance, we can choose~$\mathbf w$ to be proportional to~$\vartheta_{1256}$, namely,
\begin{equation}\label{w1}
\mathbf w = \frac{\vartheta_{1256}}{\beta_{12,1256}},
\end{equation}
where $\beta_{12,1256}$ is the coefficient of~$\partial_{1256}$ in the edge operator~$d_{12}$, see the definition~\eqref{I.14c}.
\end{example}

We now introduce the ``pillow Grassmann weight'' for our first move 0--2. Its unnormalized version\footnote{Later we will ``normalize'' it, see the l.h.s.\ of~\eqref{pw24}.} reads as follows:
\begin{equation}\label{P}
\mathcal P = \iiint \mathcal W_{12456} \widetilde{\mathcal W}_{12456}\, \mathbf w\,\mathrm d\vartheta_{1245}\,\mathrm d\vartheta_{1246}\,\mathrm d\vartheta_{1256} .
\end{equation}
Both $\mathcal W_{12456}$ and~$\widetilde{\mathcal W}_{12456}$ are Grassmann--Gaussian exponentials~(I.16), and the differences between them are as follows:
\begin{enumerate}\itemsep 0pt
 \item as they belong to the pentachora with opposite orientations, their respective edge operators have the same differential parts, but their `$\vartheta$-parts' differ in signs,
 \item\label{i:twoF} consequently, the two respective matrices~$F$ also differ in signs\footnote{A bit more detailed explanation: $\gamma_{ij,t}$ in~(I.44) changes its sign together with the orientation of tetrahedron~$t$, hence $\gamma_{tt'}$ in~(I.49) changes its sign together with the orientation of tetrahedron~$t'$, and this latter orientation is induced from the pentachoron.},
 \item additionally, while tetrahedra $1245$, $1246$ and~$1256$ lie inside the pillow and are common for our two pentachora, each of these has its own copies of boundary tetrahedra $1456$ and~$2456$, and we have to introduce special notation for the corresponding Grassmann generators. Namely, we denote as $\vartheta_{1456}$ and~$\vartheta_{2456}$ variables entering in the pentachoron weight~$\mathcal W_{12456}$, while $\vartheta_{1456}'$ and~$\vartheta_{2456}'$ enter in~$\widetilde{\mathcal W}_{12456}$; there are also $\vartheta_{1245}$, $\vartheta_{1246}$ and~$\vartheta_{1256}$ that enter in both weights.
\end{enumerate}

A small exercise based mainly on item~\ref{i:twoF} shows that the weight~\eqref{P} must be proportional to the product of two Grassmann delta functions~\eqref{d}:
\begin{equation}\label{Phi}
\mathcal P = \Phi \,\delta(\vartheta_{1456},\vartheta'_{1456})\, \delta(\vartheta_{2456},\vartheta'_{2456}).
\end{equation}
Coefficient~$\Phi$ can be expressed in terms of values~$\omega_s$ for 2-faces~$s$ of our pentachoron~$12456$ in many equivalent ways; one of these is described in the following Example~\ref{xmp:Phi}.

\begin{example}\label{xmp:Phi}
We calculate the coefficient~$\Phi$ by considering the terms proportional to $\vartheta_{1456}\vartheta'_{2456}$ in both sides of~\eqref{Phi}. Also, we take the edge weight in the form~\eqref{w1}. This means that the only term in the product $\mathcal W_{12456} \widetilde{\mathcal W}_{12456}$ in~\eqref{P} that is important for us now is that proportional to $\vartheta_{1456}\vartheta_{1245}\vartheta'_{2456}\vartheta_{1246}$, and the coefficient of proportionality is
\[
F_{1456,1245}\widetilde F_{2456,1246}-F_{1456,1246}\widetilde F_{2456,1245}.
\]
Here we denoted, just for clearness, matrix elements coming from the weight~$\mathcal W_{12456}$ as~$F_{tt'}$, while those coming from the weight~$\widetilde{\mathcal W}_{12456}$ as~$\widetilde F_{tt'}$; as we have already explained, $\widetilde F_{tt'}=-F_{tt'}$. One can also see from formula~(I.16) that the monomial in~$\mathcal W_u$ proportional to $\vartheta_t \vartheta_{t'}$, for two 3-faces $t,t'\subset u$, is $-F_{tt'}\vartheta_t \vartheta_{t'}$. It follows that one possible expression for~$\Phi$ is
\begin{equation}\label{Phii}
\Phi = \frac{F_{1456,1245}F_{2456,1246}-F_{1456,1246}F_{2456,1245}}{\beta_{12,1256}}.
\end{equation}
\end{example}

As we noted in Subsection~\ref{ss:24}, four new 2-faces, namely $124$, $125$, $126$ and~$456$, appear when our first move 0--2 is done. It turns out that the following quantity:
\[
\mathrm H = \frac{q_{124}q_{125}q_{126}q_{456}}{\Phi},
\]
where the numerator is the product of values~$q_s$ for the mentioned 2-faces, has remarkable symmetry properties. Roughly speaking, $\mathrm H$ is invariant under all permutations of the pentachoron vertices. The exact statement is presented below as Theorem~\ref{th:eta}, but first we introduce one more notational convention.

Recall (Convention~I.1) that all our vertices are always numbered and consequently there is a natural order on them. This order plays an auxiliary---but useful---role in our constructions, as can be seen in Subsection~I.5.2, see especially formula~(I.40). Below in~\eqref{Hu}, $\{ijk\}$ means that $i$, $j$ and~$k$ are taken in this (increasing) order, for instance, $q_{\{635\}}$ means~$q_{356}$.

\begin{theorem}\label{th:eta}
Choose arbitrarily a tetrahedron~$t$ and its edge~$b$, both belonging to a given pentachoron~$u$, that is, $b\subset t\subset u$. Edge $b=ij$ $(\,=-ji)$ is understood as oriented; pentachoron~$u$ is also oriented. The orientation of~$u$ can always be written as given by the order of vertices in $u=ijklm$, where $i$ and~$j$ belong to~$b$, while $m$ is, by definition, the vertex \emph{not} belonging to the tetrahedron~$t$, and the remaining two vertices are denoted $k$ and\/~$l$ in such way as to give the needed orientation.

Then the quantity
\begin{equation}\label{Hu}
\mathrm H_u = \frac{\beta_{bt}\, q_{\{ijk\}} q_{\{ijl\}} q_{\{ijm\}} q_{\{klm\}}} {F_{iklm,ijkm}F_{jklm,ijlm}-F_{iklm,ijlm}F_{jklm,ijkm}}
\end{equation}
remains the same for all pairs $b\subset t$ and thus belongs only to~$u$ itself.
\end{theorem}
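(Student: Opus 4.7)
The plan is to reduce $\mathrm H_u$ to a quantity that is manifestly a function of~$u$ alone. The first step is to recognise that the denominator of~\eqref{Hu} is exactly $\tilde\beta_{bt}$ from~\eqref{bF}. Using the shorthand $F_{ik}=F_{jklm,ijlm}$ etc.\ (two subscripts naming the vertices removed from the first and second tetrahedron, respectively), the four matrix elements in the denominator are $F_{iklm,ijkm}=F_{jl}$, $F_{jklm,ijlm}=F_{ik}$, $F_{iklm,ijlm}=F_{jk}$ and $F_{jklm,ijkm}=F_{il}$, so the denominator equals $F_{ik}F_{jl}-F_{il}F_{jk}=\tilde\beta_{bt}$. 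Combined with the factorisation $\beta_{bt}=h_b\tilde\beta_{bt}$ from~\eqref{hb}, this collapses \eqref{Hu} to
\[
\mathrm H_u \;=\; h_b\cdot q_{\{ijk\}}\,q_{\{ijl\}}\,q_{\{ijm\}}\,q_{\{klm\}}.
\]

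Next I would observe that the four 2-faces in this product are precisely the three 2-faces of~$u$ containing the edge~$b$ together with the unique 2-face of~$u$ disjoint from~$b$; this quadruple depends on~$b$ but not on~$t$. Squaring and inserting~\eqref{ph}, $\omega_{ijk}=p/(h_{ij}h_{ik}h_{jk})$, the edge $ij$ is counted three times (once per 2-face containing~$b$) while every other edge of~$u$ is counted exactly once, giving
\[
\mathrm H_u^2 \;=\; h_{ij}^2\cdot\frac{p^4}{h_{ij}^3\prod_{e\neq ij}h_e} \;=\; \frac{p^4}{\prod_{e\subset u}h_e},
\]
where the final product runs over all ten edges of~$u$. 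This depends only on~$u$, so $\mathrm H_u$ is independent of the pair $b\subset t$, up to the sign inherent in $q_s=\sqrt{\omega_s}$, which is already sanctioned by Remark~\ref{r:+-}.

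I do not anticipate a serious obstacle; the whole argument is the identification of $\tilde\beta_{bt}$ in the denominator plus an edge-count inside the four relevant 2-faces. Two ancillary checks deserve a brief mention: first, $h_b$ is a function of~$b$ alone, as asserted in the text right after~\eqref{pf} (the r.h.s.\ there is independent of the auxiliary~$t$), so the first step above makes sense; second, reversing the orientation of~$b$ flips the sign of both $\tilde\beta_{bt}$ and (by the same calculation) the denominator of~\eqref{Hu}, so the ratio~$h_b$ and hence~$\mathrm H_u$ is unaffected. The main delicate issue is really just the bookkeeping of signs of square roots, and this has been explicitly absorbed into the sign-ambiguity convention of the paper.
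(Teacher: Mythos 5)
Your proposal is correct and is, in substance, exactly the ``easy exercise'' that the paper's own proof alludes to: the paper simply says to treat the entries of~$F$ as independent variables and use the formulas of Section~\ref{s:oF}, and you carry that out --- identifying the denominator of~\eqref{Hu} as $\tilde\beta_{bt}$ (your dictionary $F_{iklm,ijkm}=F_{jl}$, etc., checks out against the convention $F_{ik}=F_{jklm,ijlm}$), cancelling it against $\beta_{bt}=h_b\tilde\beta_{bt}$, and then using \eqref{ph} and the edge count in the four 2-faces attached to~$b$ to get $\mathrm H_u^2=p^4/\prod_{e\subset u}h_e$. The only point where you are slightly too casual is the final sign: squaring pins down $\mathrm H_u$ only up to $\pm1$, and since $\eta_u=\sqrt{\mathrm H_u}$ enters $I$ to the \emph{first} power, a sign discrepancy in $\mathrm H_u$ between two edges would contribute a factor $\sqrt{-1}$ to $I$, which is \emph{not} absorbed by the $\pm$ convention of Remark~\ref{r:+-} (the paper itself distinguishes this ``fourth-root'' subtlety in Theorem~\ref{th:oe}). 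To close this you would need to check that the agreed sign constraints on the $q_s$ --- i.e.\ \eqref{I.40} together with (I.37)--(I.38) --- force $h_b\,q_{\{ijk\}}q_{\{ijl\}}q_{\{ijm\}}q_{\{klm\}}$ to agree on the nose between two edges of~$u$, not merely up to sign; this is a finite verification in the same spirit as the rest of your argument, but it is not automatic from Remark~\ref{r:+-}.
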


\begin{proof}
This Theorem states some algebraic equalities between quantities expressible in terms of the cocycle~$\omega$. So, in principle, they can be proved by a direct calculation. In reality, however, this turned out too hard even for a computer using computer algebra.

Happily, there is a roundabout way. As only one matrix~$F$ is present in~\eqref{Hu}, we can take \emph{its entries as independent variables}, and express everything in their terms. All necessary formulas are written out in Section~\ref{s:oF} and, with them, the proof becomes an easy exercise.
\end{proof}

Recall that we have not yet given the expression for quantities~$\eta_u$. \emph{Define} now $\eta_u$ as follows:
\begin{equation}\label{eta}
\eta_u = \sqrt{\mathrm H_u}.
\end{equation}
This definition contains, due to~\eqref{Hu}, \emph{fourth} roots of values~$\omega_s$. As the signs of~$\omega_s$ already depend on a vertex ordering: $\omega_{ijk}=-\omega_{jik}$ and so on, taking further roots involves such quantities as~$\sqrt{-1}$, and these must be taken under control. This is what the following theorem is about.

\begin{theorem}\label{th:oe}
Quantity~$I$~\eqref{inv}, taken up to a sign\footnote{Recall Remark~\ref{r:+-}.}, does not depend on a vertex ordering, if the values of square roots~$K_t$~(I.36) are fixed\/\footnote{It is understood, of course, that the conditions (I.37) and~(I.38) are also fulfilled. As for the actual \emph{independence} of the signs of~$K_t$, it will be proved in Lemma~\ref{l:rr}.}\!.
\end{theorem}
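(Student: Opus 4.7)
The plan is to reduce to adjacent transpositions of two vertices $a\leftrightarrow a'$ in the ordering (since any permutation is a composition of such) and to show that each of the three ingredients of~\eqref{inv} picks up only a tractable phase, whose total amounts to a sign.

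Under such a transposition, $\omega_s$ flips sign iff $\{a,a'\}\subset s$ (by alternation of~$\omega$), so each such~$q_s$ picks up $\sqrt{-1}$ up to a branch choice; all other $q_s$ are unchanged. Simultaneously, for each pentachoron $u\supset\{a,a'\}$ the canonical vertex order of~$u$ changes parity, so the weight~(I.16) is represented in the new ordering by~$-F$ rather than~$F$. To compute the change in $\eta_u=\sqrt{H_u}$, I would exploit the freedom in Theorem~\ref{th:eta} to put $b=aa'$ in~\eqref{Hu}; then the three $q_s$ in the numerator carrying the label $\{a,a'\}$ each pick up~$\sqrt{-1}$, and a direct calculation using~\eqref{bF} identifies the $F$-denominator with $\tilde\beta_{bt}$, so that it cancels the $\tilde\beta_{bt}$ factor hidden inside~$\beta_{bt}=h_b\tilde\beta_{bt}$. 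The result is $H_u=h_b\,q_{\{ijk\}}q_{\{ijl\}}q_{\{ijm\}}q_{\{klm\}}$, which by~\eqref{ph} equals the edge-symmetric expression $H_u=p^2/\sqrt{\prod_{b\subset u}h_b}$. In this form the $F\to-F$ flip contributes $(-1)^{10}=1$ on the ten factors $h_b$ and leaves only the analysis of the single scalar~$p$.

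The second step is to use the hypothesis that the $K_t$ are fixed. For any tetrahedron $t\supset\{a,a'\}$, exactly two of the four $q_s$ appearing in~\eqref{I.40} sit on 2-faces containing $\{a,a'\}$; preserving $K_t$ therefore forces the product of their two $\sqrt{-1}$-branches to be~$-1$. Propagating this constraint through the tetrahedra sharing the edge~$aa'$, the branches of every affected $q_s$ are pinned down globally, and the phases from $\prod_s q_s^{-1}$ and $\prod_u\eta_u$ combine into a sign. Finally, $\sqrt\tau$ changes at most by a sign under a basis permutation of the chain complex~(I.11), which is absorbed by the $\pm$-ambiguity noted in Remark~\ref{r:+-}.

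The main obstacle will be the detailed sign and branch bookkeeping outlined above---in particular, verifying that the $K_t$-constraint admits a globally coherent branch assignment on the 2-faces containing~$\{a,a'\}$ and that the scalar~$p$ in each pentachoron transforms in a way compatible with the sign-flips of~$\omega$. Working throughout with the symmetric form $H_u=p^2/\sqrt{\prod_{b\subset u}h_b}$ should streamline this considerably, since it trades the asymmetric choice of edge and tetrahedron in~\eqref{Hu} for a formula that treats the ten edges of~$u$ on equal footing.
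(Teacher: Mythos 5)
Your reduction to adjacent transpositions, and the observation that fixing the $K_t$ in~\eqref{I.40} forces all the $q_s$ with $s\supset aa'$ to pick up the \emph{same} root of $-1$ (call it $\epsilon=\pm\sqrt{-1}$, since the product of the two branches within each tetrahedron must be $-1$ and the link of the edge is connected), are exactly the paper's first steps. But the decisive part of the argument is missing. You assert that ``the phases from $\prod_s q_s^{-1}$ and $\prod_u\eta_u$ combine into a sign,'' and this is precisely the content of the theorem, not something that follows from branch bookkeeping alone. Taking $b=aa'$ in~\eqref{Hu} (as you propose), exactly the three numerator factors $q_{\{ijk\}},q_{\{ijl\}},q_{\{ijm\}}$ acquire $\epsilon$, so $\mathrm H_u\mapsto\pm\epsilon^3\mathrm H_u$ and, by~\eqref{eta}, $\eta_u\mapsto\pm\epsilon^{3/2}\eta_u$ for every pentachoron $u\supset aa'$. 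Meanwhile $\prod_s q_s^{-1}$ acquires $\epsilon^{-n_2}$, where $n_2$ is the number of triangles around the edge. Neither factor is a sign by itself ($n_2$ need not be even, and $\epsilon^{3/2}$ is an eighth root of unity), so one must count: the link of an edge in a closed triangulated $4$-manifold is a $2$-sphere, whence the number of pentachora around the edge is $n_4=2n_2-4$, and the total phase is $\pm\,\epsilon^{-n_2}\,\epsilon^{3n_4/2}=\pm\,\epsilon^{2n_2-6}=\pm(-1)^{n_2-3}=\pm1$. This combinatorial input is the heart of the paper's proof and is absent from your proposal.

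The detour through the ``edge-symmetric'' form is also a wrong turn, even though the identity itself is correct: the denominator of~\eqref{Hu} is indeed $\tilde\beta_{bt}$ by~\eqref{bF}, and combining with~\eqref{ph} one gets $\mathrm H_u=\pm p^2\big/\sqrt{\prod_{b\subset u}h_b}$. But this form hides the ordering dependence inside $p$ and the $h_b$, whose transformation law is exactly what you would need to determine \emph{from} the known behavior of the $\omega_s$ --- the argument becomes circular. Worse, your claim that all ten $h_b$ simply flip sign is incompatible with~\eqref{ph}: if every $h_b\mapsto -h_b$, then every $\omega_{ijk}=p/(h_{ij}h_{ik}h_{jk})$ with $s\subset u$ would change by the same factor, whereas only the three $2$-faces of $u$ containing $aa'$ have $\omega_s\mapsto-\omega_s$. (The labels $k,l,m$ in \eqref{bF}--\eqref{gF} are also reassigned to preserve orientation, so ``$F\mapsto-F$'' does not act on $\tilde\beta,\tilde\gamma,h_b$ as uniformly as you assume.) The clean route is the one already implicit in your second paragraph: read the phase of $\eta_u$ directly off~\eqref{Hu} with $b=aa'$, where Theorem~\ref{th:eta} guarantees this choice is legitimate and all ordering dependence sits in the four $q$-factors, and then do the $n_4=2n_2-4$ count. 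Your closing remark on $\sqrt{\tau}$ is fine and is covered by Remark~\ref{r:+-}.
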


\begin{proof}
It is enough to consider the interchange of numbers between two vertices with neighboring numbers $i$ and $j=i+1$. Moreover, anything nontrivial may happen only if there is an edge~$ij$ in the triangulation. In this case, all~$q_s$ with $s\supset ij$ are multiplied by the \emph{same} root of~$-1$: this follows from the fact that, for any tetrahedron~$t$, the product $\prod_{s\subset t} q_s$ must give the quantity~$K_t$ corresponding to the orientation of~$t$ determined by the order of vertices, see~\eqref{I.40} (and this orientation changes when the numbers are permuted).

Denote the number of triangles~$s$ around edge~$ij$ as~$n_2$; then the number of pentachora around~$ij$ is $n_4=2n_2-4$ (this is an easy exercise using the fact that the \emph{link} of an edge is a triangulated two-sphere). As a result of our permutation, $\prod_{\mathrm{all}\;s}q_s$ is multiplied by $(\sqrt{-1})^{n_2}$. On another hand, $\eta_u$ for each pentachoron $u\supset ij$ gets multiplied by $\pm(\sqrt{-1})^{3/2}$, according to~\eqref{Hu} and the fact that there are exactly three triangles in~$u$ containing~$ij$. The overall factor for $\prod_{\mathrm{all}\;u}\eta_u$ is thus $\pm(\sqrt{-1})^{3n_2-6}$, and this means that, altogether, \eqref{inv} acquires a factor of $\pm(\sqrt{-1})^{2n_2-6}$, which is~$\pm 1$.
\end{proof}

Formulas \eqref{Phi} and~\eqref{eta} imply the following beautiful identity for the pillow Grassmann weight~$\mathcal P$:
\begin{equation}\label{pw24}
\frac{\eta_{12456}^2}{q_{124}q_{125}q_{126}q_{456}} \mathcal P = \delta(\vartheta_{1456},\vartheta'_{1456})\,\delta(\vartheta_{2456},\vartheta'_{2456}),
\end{equation}
with a transparent meaning of each factor in it. Namely, the numerator~$\eta_{12456}^2$ of the fraction before~$\mathcal P$ corresponds to the fact that two new pentachora~$12456$ have appeared, and the denominator---to the fact that new 2-faces $124$, $125$ and~$126$ have appeared, while 2-face~$456$ has doubled.

\begin{theorem}\label{th:m1}
Expression~\eqref{inv} remains the same under our ``first move 0--2''.
\end{theorem}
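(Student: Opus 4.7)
The plan is to rewrite $I$ via Lemma~\ref{l:GB} as a multiple Berezin integral, perform the integrations over the newly created Grassmann variables of the pillow \emph{first}, and then recognize the resulting scalar factor as exactly that appearing in the identity~\eqref{pw24}. The only new edge introduced by the first move 0--2 is $12$, which lies entirely inside the pillow and belongs in the new triangulation only to the three inner tetrahedra $1245$, $1246$, $1256$; hence the global edge operator $D_{12}$ is a linear combination of $\partial_{1245}$, $\partial_{1246}$, $\partial_{1256}$ alone, and $\boldsymbol{\partial}^{\mathrm{new}}=\boldsymbol{\partial}^{\mathrm{old}}D_{12}$. Correspondingly I take $(\boldsymbol{\partial}^{-1}1)^{\mathrm{new}}=(\boldsymbol{\partial}^{-1}1)^{\mathrm{old}}\,\mathbf w$ with $\mathbf w$ depending only on the three new inner Grassmann variables (as in Example~\ref{xmp:w1}), and write $\vartheta_{1456},\vartheta_{2456}$ for the variables on the boundary tetrahedra glued to the adjacent pentachora on one side of the pillow, and $\vartheta'_{1456},\vartheta'_{2456}$ for those on the other (newly created) side.

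The three inner variables $\vartheta_{1245},\vartheta_{1246},\vartheta_{1256}$ appear in the Berezin integrand only through $\mathcal W_{12456}$, $\widetilde{\mathcal W}_{12456}$ and $\mathbf w$, so these integrations can be carried out first; by the definition~\eqref{P} this produces the pillow weight~$\mathcal P$. Using the identity~\eqref{pw24},
\[
\mathcal P = \frac{q_{124}q_{125}q_{126}q_{456}}{\eta_{12456}^2}\,\delta(\vartheta_{1456},\vartheta'_{1456})\,\delta(\vartheta_{2456},\vartheta'_{2456}).
\]
Integrating further in the primed variables $\vartheta'_{1456}$ and $\vartheta'_{2456}$ against these Grassmann delta functions, via the property~\eqref{de}, substitutes the unprimed variables for the primed ones throughout the remaining integrand. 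What is left is precisely the integrand for $\sqrt{\tau^{\mathrm{old}}}$ in the original triangulation, with $\vartheta_{1456}$ and $\vartheta_{2456}$ shared between the two pentachora that were glued across these tetrahedra before the move. Consequently
\[
\sqrt{\tau^{\mathrm{new}}}=\pm\,\frac{q_{124}q_{125}q_{126}q_{456}}{\eta_{12456}^2}\,\sqrt{\tau^{\mathrm{old}}},
\]
the sign stemming from the reordering of Berezin differentials and from the delta-function substitution, both immaterial by Remark~\ref{r:+-}.

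Combining this ratio with the remaining changes in~\eqref{inv}---the factor $\eta_{12456}^2$ from the two new pentachora and the factor $(q_{124}q_{125}q_{126}q_{456})^{-1}$ from the three new 2-faces $124,125,126$ together with the doubled face $456$---yields $I^{\mathrm{new}}=\pm I^{\mathrm{old}}$, as required. The delicate points I expect to have to check carefully are (i) that the new edge operator $D_{12}$ really is supported on only the three inner pillow tetrahedra (this is what makes the intermediate integration legitimate and permits the choice of $\mathbf w$ above), and (ii) the sign bookkeeping in the Berezin reorderings and the delta substitution. The former is immediate from the geometric description in Subsection~\ref{ss:24}, and the latter does not affect $I$ by Remark~\ref{r:+-}; thus the substance of the argument is really the already-established identity~\eqref{pw24}.
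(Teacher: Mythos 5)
Your proposal is correct and follows essentially the same route as the paper's (much terser) proof: combine the choice~\eqref{1/d} of $(\boldsymbol{\partial}^{-1}1)^{\mathrm{new}}$ with the normalized pillow identity~\eqref{pw24} and the delta-function property~\eqref{de} inside the Berezin integral of Lemma~\ref{l:GB}, so that the integral after the move equals the integral before the move times $q_{124}q_{125}q_{126}q_{456}/\eta_{12456}^2$, which is exactly cancelled by the new factors in~\eqref{inv}. Your added observations---that $D_{12}$ is supported only on the three inner tetrahedra and that all sign ambiguities are absorbed by Remark~\ref{r:+-}---are correct elaborations of what the paper leaves implicit.
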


\begin{proof}
Due to \eqref{1/d} and~\eqref{pw24}, and using the fundamental property~\eqref{de} of delta function, we can express the integral in~\eqref{23GB}, taken after the move 0--2, as the same integral taken before this move and multiplied by $\frac{q_{124}q_{125}q_{126}q_{456}}{\eta_{12456}^2}$. Then it follows from \eqref{23GB} and~\eqref{rt} that the value~$I$~\eqref{inv} remains the same.
\end{proof}

\subsection{Relation corresponding to the second move 0--2}\label{ss:02_15}

The second move 0--2 has been described in Subsection~\ref{ss:15}. This time, we insert two pentachora~$13456$ in the triangulation, and four new edges are added, namely $13$, $14$, $15$ and~$16$. Any three of them can be added to the set~$\mathsf B$:
\[
\mathsf B^{\mathrm{new}}=\mathsf B^{\mathrm{old}}\cup \{b_1^{\mathrm{new}},b_2^{\mathrm{new}},b_3^{\mathrm{new}}\}.
\]
Accordingly, we choose
\[
(\boldsymbol{\partial}^{-1} 1)^{\mathrm{new}}=(\boldsymbol{\partial}^{-1} 1)^{\mathrm{old}} \mathbf w,
\]
where this time $\mathbf w$ satisfies
\[
D_{b_1^{\mathrm{new}}}D_{b_2^{\mathrm{new}}}D_{b_3^{\mathrm{new}}}\mathbf w=1
\]
and depends only on variables~$x_t$ on newly created tetrahedra~$t$.

For instance, take $b_1^{\mathrm{new}}=13$, \ $b_2^{\mathrm{new}}=14$, \ $b_3^{\mathrm{new}}=15$. Then $\mathbf w$ can be taken proportional to $\vartheta_{1345} \vartheta_{1346} \vartheta_{1356}$, and the proportionality factor is $(\det m)^{-1}$, where $m$ is the matrix of coefficients of $\partial_{1345}$, $\partial_{1346}$ and~$\partial_{1356}$ in edge operators $d_{13}$, $d_{14}$ and~$d_{15}$:
\[
m = \begin{pmatrix} \beta_{13,1345} & \beta_{13,1346} & \beta_{13,1356} \\
                    \beta_{14,1345} & \beta_{14,1346} & 0 \\
                    \beta_{15,1345} & 0 & \beta_{15,1356} \end{pmatrix}.
\]
Recall that formulas for coefficients~$\beta_{bt}$ can be found in Subsection~I.5.3.

We would like to have now the following equality for the ``pillow weight'':
\begin{multline}\label{pw15}
\frac{\eta_{13456}^2}{q_{134}q_{135}q_{136}q_{145}q_{146}q_{156}} \iiiint \mathcal W_{13456} \widetilde{\mathcal W}_{13456}\, \mathbf w\,\mathrm d\vartheta_{1345}\,\mathrm d\vartheta_{1346}\,\mathrm d\vartheta_{1356}\,\mathrm d\vartheta_{1456} \\
=\delta(\vartheta_{3456}, \vartheta_{3456}'),
\end{multline}
where $\vartheta_{3456}$ and~$\vartheta_{3456}'$ correspond to the two copies of tetrahedron~$3456$, having Grassmann weights $\mathcal W_{13456}$ and $\widetilde{\mathcal W}_{13456}$, respectively.

Indeed, the quadruple integral in~\eqref{pw15} is easily evaluated to
\begin{equation}\label{Psi*}
-\frac{F_{3456,1456}}{\det m}\, \delta(\vartheta_{3456}, \vartheta_{3456}'),
\end{equation}
where the delta function appears, much like in Subsection~\ref{ss:02_24}, from integrating the product of two exponentials of quadratic forms differing in signs. The miracle is the following.

\begin{theorem}\label{th:e2}
The inverse to the coefficient of delta function in~\eqref{Psi*} can again be written in terms of $\eta_u$ and~$q_s$ according to the same principle as in~\eqref{pw24}, that is, as a product of $\eta_u$ for each new pentachoron~$u$, and $q_s^{-1}$ for each new 2-face. Hence, formula~\eqref{pw15} holds indeed.
\end{theorem}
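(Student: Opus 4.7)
The plan is to mimic the strategy already used successfully in Theorem~\ref{th:eta}: since only the single matrix~$F$ of pentachoron~$13456$ enters the statement, we treat its entries as independent variables and reduce the claim to an algebraic identity using the explicit formulas of Section~\ref{s:oF} for $\beta_{bt}$, $h_b$ and~$\omega_s$.

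First I would confirm the evaluation~\eqref{Psi*} itself. Writing $\mathcal W_{13456}$ and $\widetilde{\mathcal W}_{13456}$ as Grassmann--Gaussian exponentials with matrices $F$ and~$-F$ respectively (item~\ref{i:twoF} of Subsection~\ref{ss:02_24}) and using $\mathbf w\propto\vartheta_{1345}\vartheta_{1346}\vartheta_{1356}/\det m$, the quadruple integration extracts, up to a sign and the prefactor $1/\det m$, the coefficient of $\vartheta_{1345}\vartheta_{1346}\vartheta_{1356}\vartheta_{1456}$ --- viewed as a polynomial in the remaining variables $\vartheta_{3456}$ and~$\vartheta_{3456}'$ --- in the product $\mathcal W_{13456}\widetilde{\mathcal W}_{13456}$. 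By the same mechanism as in Example~\ref{xmp:Phi} (contributions from $\mathcal W$ and $\widetilde{\mathcal W}$ coming with opposite signs), this assembles into a multiple of $\delta(\vartheta_{3456},\vartheta_{3456}')$ whose prefactor is precisely $-F_{3456,1456}$.

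The core of the proof is then the identity
\[
-\frac{\det m}{F_{3456,1456}} \;=\; \frac{\eta_{13456}^{\,2}}{q_{134}\,q_{135}\,q_{136}\,q_{145}\,q_{146}\,q_{156}},
\]
understood up to sign (Remark~\ref{r:+-}). Squaring removes every square root, since $q_s^2=\omega_s$ and $\eta_u^2=\mathrm H_u$. By Theorem~\ref{th:eta}, $\mathrm H_u$ may be computed from~\eqref{Hu} with any $b\subset t\subset u$; I would choose $b=13$, $t=1345$, so that three of the four $q$'s appearing in $\mathrm H_u$ match three of the six $q$'s in the denominator, and the remaining ratio $q_{456}/(q_{145}q_{146}q_{156})$ becomes, by~\eqref{ph}, $\pm h_{14}h_{15}h_{16}/p$. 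On the left, \eqref{hb} factors each row of~$m$ as $h_b$ times a polynomial in~$F$, yielding $\det m=h_{13}h_{14}h_{15}\det\tilde m$ with $\det\tilde m$ polynomial in~$F$ by~\eqref{bF}; and each $h_b$ is itself polynomial in~$F$ via~\eqref{pf}. After substitution the squared identity becomes a rational equality in the entries of~$F$ alone, the scalar~$p$ dropping out by homogeneity.

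That polynomial verification is then routine in principle and can be carried out directly or confirmed by a short symbolic computation, exactly as advocated at the end of the proof of Theorem~\ref{th:eta}. The main obstacle I foresee is organisational rather than conceptual: the signs produced by the Berezin integration, by $\det\tilde m$, and by extracting the final square root must be reassembled consistently. Since $I$ is defined only up to a sign this is tolerable; one should nevertheless check that the two pentachora $13456$ and $\widetilde{13456}$ contribute the \emph{same} factor $\eta_{13456}$ --- which follows, using the formulas of Section~\ref{s:oF} and the observation that the scalar~$p$ in~\eqref{pf} must flip sign together with~$F$ in order to keep $\omega_s$ consistent between the two pentachora --- so that their joint contribution is $\eta_{13456}^{\,2}$, as demanded by the statement.
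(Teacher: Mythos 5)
Your proposal follows essentially the same route as the paper's own (very terse) proof: treat the entries of the single matrix~$F$ as independent variables, express $\beta_{bt}$, $h_b$, $\omega_s$ and hence $\eta_{13456}$ and the $q_s$ through the formulas of Section~\ref{s:oF}, and reduce the claim to a rational identity in those entries that is verified by direct (computer) calculation. Your additional remarks on the delta-function structure of~\eqref{Psi*} and on the sign bookkeeping are sensible elaborations of details the paper leaves implicit, but they do not change the method.
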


\begin{proof}
Take, like we did in the proof of Theorem~\ref{th:eta}, the entries of~$F$ as independent variables. This makes the proof feasible by a direct computer calculation.
\end{proof}

\begin{theorem}\label{th:m2}
Expression~\eqref{inv} remains the same under our ``second move 0--2''.
\end{theorem}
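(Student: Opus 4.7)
The plan is to argue in direct parallel to the proof of Theorem~\ref{th:m1}, substituting the pillow identity~\eqref{pw15} for~\eqref{pw24}. First I would track how the ingredients of formula~\eqref{inv} change under the second move 0--2: two new pentachora (both labeled~$13456$, one with reversed orientation) appear, multiplying $\prod_u \eta_u$ by $\eta_{13456}^2$; six new 2-faces $134, 135, 136, 145, 146, 156$ appear, multiplying $\prod_s q_s^{-1}$ by $(q_{134}q_{135}q_{136}q_{145}q_{146}q_{156})^{-1}$. Thus the prefactor of $\sqrt{\tau}$ in~\eqref{inv} is multiplied by exactly $\frac{\eta_{13456}^2}{q_{134}q_{135}q_{136}q_{145}q_{146}q_{156}}$.

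In parallel, I would show that the multiple integral in the right-hand side of~\eqref{23GB}, computed after the move, equals the corresponding pre-move integral times the reciprocal factor $\frac{q_{134}q_{135}q_{136}q_{145}q_{146}q_{156}}{\eta_{13456}^2}$. This proceeds in two steps. First, using~\eqref{1/d} with $\mathbf{w}$ chosen as in Subsection~\ref{ss:02_15} (depending only on $\vartheta_{1345}, \vartheta_{1346}, \vartheta_{1356}$), I would isolate the inner integration over the four pillow-internal tetrahedron variables $\vartheta_{1345}, \vartheta_{1346}, \vartheta_{1356}, \vartheta_{1456}$. Since $(\boldsymbol{\partial}^{-1}1)^{\mathrm{old}}$ and all pentachoron weights $\mathcal{W}_u$ for $u$ not in the pillow are independent of these four variables, they pull outside, and the inner integral is exactly the quadruple integral of~\eqref{pw15}, which by Theorem~\ref{th:e2} evaluates to $\frac{q_{134}q_{135}q_{136}q_{145}q_{146}q_{156}}{\eta_{13456}^2}\,\delta(\vartheta_{3456},\vartheta_{3456}')$. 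Second, I would apply~\eqref{de} to the remaining integration over $\vartheta_{3456}'$: the delta function identifies $\vartheta_{3456}'$ with $\vartheta_{3456}$, collapsing the two boundary copies of the pillow's external tetrahedron back into the single pre-move variable and reproducing exactly the pre-move integrand. Combining this with~\eqref{23GB} and~\eqref{rt}, the scalar factor from the integral cancels the prefactor change, giving $I^{\mathrm{new}} = I^{\mathrm{old}}$.

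The main substantive obstacle has already been surmounted in Theorem~\ref{th:e2}, which supplies the clean form~\eqref{pw15}; modulo that identity, the present argument is routine bookkeeping of the delta-function reduction. The one subtlety worth checking is that the new vertex~$1$ introduced by this move (unlike the first move 0--2, which introduces no vertex) does not spoil the matching: the new rows/columns of the complex~(I.11) corresponding to this vertex and the new edges, 2-faces, and tetrahedra can be incorporated into a consistent choice of minors for~\eqref{rt} before and after the move, exactly in the manner implicit in Theorem~\ref{th:m1}, so that the equality $\sqrt{\tau^{\mathrm{new}}}/\sqrt{\tau^{\mathrm{old}}}$ matches the integral ratio computed above.
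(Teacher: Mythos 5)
Your proposal is correct and follows essentially the same route as the paper, whose own proof simply says the invariance is established in full analogy with Theorem~\ref{th:m1}: pull the pillow factors out via~\eqref{1/d}, apply the normalized pillow identity~\eqref{pw15}, collapse the delta function by~\eqref{de}, and match the resulting scalar against the change in the prefactor of~\eqref{inv}. Your extra remark about the new vertex and the consistent choice of minors is a reasonable point of care that the paper leaves implicit.
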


\begin{proof}
This invariance of~\eqref{inv} is proved in full analogy with Theorem~\ref{th:m1}.
\end{proof}

\subsection[Quantities~$\eta_u$ and move 3--3]{Quantities~$\boldsymbol{\eta_u}$ and move 3--3}\label{ss:ER}

One more miracle is that the quantities~$\eta_u$ introduced in Subsection~\ref{ss:02_24}, work also for the Pachner move 3--3.

\begin{conjecture}
Relation~\eqref{33} holds indeed, and with the same~$\eta_u$~\eqref{eta}.
\end{conjecture}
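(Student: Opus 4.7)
The plan is to treat \eqref{33} as a concrete algebraic identity and attack it by first pinning down the unknown scalar. By Theorem~9 of~\cite{full-nonlinear}, there exists \emph{some} scalar $\lambda$ (determined by the six pentachora, equivalently by the cocycle~$\omega$) such that the two triple Berezin integrals appearing in~\eqref{33} are proportional with ratio~$\lambda$. The conjecture is therefore equivalent to the single identity
\[
\lambda \;=\; \frac{\eta_{12345}\,\eta_{12346}\,\eta_{12356}\,/\,q_{123}}{\eta_{12456}\,\eta_{13456}\,\eta_{23456}\,/\,q_{456}},
\]
which is a closed-form algebraic statement with no remaining Grassmann content.

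To extract $\lambda$, I would pick a single Grassmann monomial~$M$ in the nine boundary variables $\vartheta_t$ (the shared 3-faces of the two sides, namely $1245$, $1246$, $1256$, $1345$, $1346$, $1356$, $2345$, $2346$, $2356$) for which the coefficients on both sides of~\eqref{33} are demonstrably nonzero; the ratio of these two coefficients is exactly~$\lambda$. Each coefficient is, by~\eqref{pfi} and the Gaussian form of~$\mathcal W_u$, a polynomial in the entries of the six matrices~$F$ attached to the six pentachora. Then, as in the proofs of Theorems~\ref{th:eta} and~\ref{th:e2}, the entries of each $F$ can be \emph{treated as independent variables}; the identities of Section~\ref{s:oF}---in particular \eqref{bF}--\eqref{gF}, \eqref{pf} and~\eqref{ph}---translate both $\lambda$ and the claimed right-hand side above into rational functions (times outer square roots) of the same pool of variables.

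The hard part will be the symbolic verification. After squaring both sides to clear the outer $\sqrt{\mathrm H_u}$'s coming from~\eqref{eta}, one is left with a rational identity in roughly ten independent 2-cocycle values~$\omega_s$ and the associated matrix entries; this is the ``formula involving ten indeterminates'' mentioned in the Introduction. Even after squaring, each side expands into an enormous number of monomials, and generic computer-algebra systems exhaust memory before the cancellation completes---this is precisely why the author flags the statement as a Conjecture despite extensive numerical evidence.

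A realistic route forward is therefore twofold. First, build a specialized normal-form procedure that exploits the Pl\"ucker-like structure of \eqref{bF}--\eqref{gF} and the multiplicative form~\eqref{ph} of~$\omega_s$, rather than expanding blindly. Second---and more attractive---look for a conceptual reduction: interpret $\eta_u$ as a natural tensor normalizer so that~\eqref{33} becomes an associativity statement relating the pentachoron weights~$\mathcal W_u$ to the delta-function kernels appearing in~\eqref{pw24} and~\eqref{pw15}. The very fact that the same $\eta_u$ defined from the 0--2 moves also works for the 3--3 move hints at such a hidden associativity; making this precise, rather than verifying the identity term by term, is in my view the right place to expect a genuine proof.
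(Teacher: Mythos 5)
Your proposal does not actually prove the statement, and you should be aware that the paper itself does not either: this is precisely the point flagged as a \emph{Conjecture}, which the author describes in the Introduction as ``a firmly established mathematical fact, although not yet formally proven,'' verified only numerically because symbolic verification exceeds available computer capabilities. So there is no ``paper's own proof'' to compare against. Your reduction of the conjecture to a single scalar identity for the proportionality coefficient~$\lambda$ --- invoking the existence statement of Theorem~9 of~\cite{full-nonlinear}, extracting $\lambda$ as a ratio of coefficients of one boundary monomial, and then treating the entries of~$F$ as independent variables as in the proofs of Theorems~\ref{th:eta} and~\ref{th:e2} --- is faithful to how the author frames the problem (the ``formula involving ten indeterminates'' of the Introduction), and your two suggested routes forward (specialized normal forms exploiting the structure of \eqref{bF}--\eqref{gF} and~\eqref{ph}, or a conceptual associativity argument) essentially coincide with the directions the author himself mentions in a footnote.

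The genuine gap is that your argument stops exactly where the difficulty begins: the closed-form identity for~$\lambda$ is asserted to be the crux but is never established, and everything after the reduction is a description of why it is hard rather than a proof. One further caution: in reducing to a single monomial coefficient you implicitly assume that a monomial with nonzero coefficient on both sides can be exhibited and that the proportionality of the two integrals (rather than, say, degeneracy of one side) holds for the cocycle data at hand; that nondegeneracy should be argued, not assumed, if this reduction is to be the backbone of an eventual proof. As it stands, your proposal is an honest and well-aligned research plan, not a proof --- which puts it on the same footing as the paper.
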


As we have explained in the Introduction, this Conjecture is actually a firmly established mathematical fact, although not yet formally proven.

\section[Dependence of~$I$ only on~$M$ and the cohomology class of~$\omega$]{Dependence of~$\boldsymbol{I}$ only on~$\boldsymbol{M}$ and the cohomology class of~$\boldsymbol{\omega}$}\label{s:f}

We will be able to call our quantity $I$~\eqref{inv} the invariant of a pair ``piecewise linear manifold~$M$, cohomology class~$h\ni \omega$\,''---assuming of course our Conjecture in Subsection~\ref{ss:ER}, and keeping in mind Remark~\ref{r:+-}---if we check its invariance under everything that may be changed in out calculations. Namely, $I$ must be independent of
\begin{enumerate}\itemsep 0pt
 \item\label{i:a} a specific triangulation,
 \item\label{i:b} vertex ordering,
 \item\label{i:c} permitted signs of~$K_t$~(I.36),
 \item\label{i:d} and choice of~$\omega$ within its cohomology class.
\end{enumerate}

Item~\ref{i:a} is solved using our formulas corresponding to Pachner moves---this has been the main subject matter in the previous Sections.

Item~\ref{i:b} has been solved in Theorem~\ref{th:oe}.

The two remaining items are solved below in Lemma~\ref{l:rr} and Theorem~\ref{th:inv}.

\begin{lemma}\label{l:rr}
Assuming our Conjecture, the quantity~$I$ is the same for any permitted (see (I.37) and~(I.38)) choice of signs of~$K_t$.
\end{lemma}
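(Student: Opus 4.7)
My approach is to recast the sign freedom of the $K_t$ in terms of the more elementary square-root choices $q_s=\pm\sqrt{\omega_s}$ on 2-faces, and then track the transformation of each factor of~\eqref{inv} directly. The Conjecture is used only to the extent that it lets us work within a single fixed triangulation throughout.

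First, I would verify that by~\eqref{I.40} any sign flip $q_s\to\epsilon_s q_s$ with $\epsilon_s\in\{\pm 1\}$ induces $K_t\to\epsilon_t K_t$ with $\epsilon_t=\prod_{s\subset t}\epsilon_s$, and conversely that the conditions (I.37)--(I.38) are precisely the coboundary-type conditions characterizing the permitted $(\epsilon_t)$ as products of this form. From here on it suffices to show that $I$ is invariant up to sign under every flip $q_s\to\epsilon_s q_s$. Under such a flip, the explicit prefactor $\prod_s q_s^{-1}$ of~\eqref{inv} picks up $\prod_s\epsilon_s$; the torsion $\sqrt{\tau}$ is untouched, since each matrix $f_i$, and in particular each pentachoron's $F$, depends on $q_s$ only through $\omega_s=q_s^2$ and the combinatorial data. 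Each $\eta_u=\sqrt{\mathrm H_u}$ transforms as $\eta_u\to\mu_u\eta_u$, where by~\eqref{Hu}
\[
\mu_u^2 \;=\; \alpha_u \;:=\; \epsilon_{\{ijk\}}\epsilon_{\{ijl\}}\epsilon_{\{ijm\}}\epsilon_{\{klm\}};
\]
Theorem~\ref{th:eta} guarantees that $\alpha_u\in\{\pm 1\}$ is intrinsic to $u$, independent of the edge $b=ij\subset u$ used when writing~\eqref{Hu}.

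Collecting, $I$ acquires the overall factor $\prod_s\epsilon_s\cdot\prod_u\mu_u$. Squaring kills the residual square-root ambiguity in the $\mu_u$ and leaves $\prod_u\alpha_u=\prod_s\epsilon_s^{N_s}$, where $N_s$ counts those pentachora $u\supset s$ in which $s$ occurs among the four distinguished 2-faces of~\eqref{Hu}; the $b$-independence of $\alpha_u$ (Theorem~\ref{th:eta}) allows the edge used per pentachoron in this count to be chosen freely. The lemma reduces to showing $\prod_u\alpha_u=1$ for every admissible $(\epsilon_s)$ coming from the first step.

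The main obstacle is this last global-combinatorial identity. I expect it to follow from closedness of $M$: the link of a 2-face in $M$ is a circle and the link of an edge is a 2-sphere, so the relevant incidence parities match automatically. A cleaner formulation would recognize the map $\epsilon\mapsto\alpha$ as a combinatorial coboundary on the sign-cochain complex of $M$, whose top-dimensional total product over the closed manifold is forced to be trivial. Once that is in place, $\prod_s\epsilon_s\cdot\prod_u\mu_u\in\{\pm 1\}$ and the lemma follows by Remark~\ref{r:+-}.
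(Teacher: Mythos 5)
Your proposal is not the paper's argument, and it has two concrete gaps. First, the step you defer---the global identity $\prod_u\alpha_u=1$---is, under your reduction, the entire content of the lemma, and you only say you ``expect'' it to follow from closedness of~$M$; no such combinatorial parity statement is supplied, and it is not even well posed (see next point). Second, and more fundamentally, your bookkeeping of how the factors of~\eqref{inv} transform is incomplete: the square roots~$K_t$ enter the construction of the chain complex itself (they appear in the matrix~$F$, i.e.\ in $f_3$ and its neighbours), not merely through the~$q_s$, so a permitted sign change alters $\sqrt\tau$, the entries~$F_{tt'}$, and the coefficients~$\beta_{bt}$ in~\eqref{Hu} as well. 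Ignoring this produces an actual inconsistency in your formula $\alpha_u=\epsilon_{\{ijk\}}\epsilon_{\{ijl\}}\epsilon_{\{ijm\}}\epsilon_{\{klm\}}$: for $u=12345$ and a flip of $q_{123}$ only, the choice $b=12$ selects the faces $123$, $124$, $125$, $345$ (sign product $-1$), while $b=14$ selects $124$, $134$, $145$, $235$ (sign product $+1$); Theorem~\ref{th:eta} cannot reconcile these unless the $F$- and $\beta$-contributions you discarded also change. So the quantity~$\alpha_u$ on which your argument rests is not well defined by the formula you give.

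The paper avoids all of this with a short indirect argument: a permitted change of signs of the~$K_t$ amounts to changing signs of some~$q_s$ (Theorem~I.5); for a single such~$q_s$, one takes a sequence of Pachner moves whose last move deletes the 2-face~$s$, then runs that sequence backwards, re-inserting~$s$ with the opposite sign of~$q_s$. Since the relations \eqref{33}, \eqref{pw24} and~\eqref{pw15} hold for either sign of~$q_s$, the value of~$I$ before and after the flip both equal the value for the intermediate triangulation, hence coincide. This is also where the Conjecture enters essentially (invariance under 3--3 moves); your proposed argument would not need it at all, which is a further indication that a direct local computation is not an available route here.
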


\begin{proof}
Permitted way of changing the signs of some~$K_t$ consists in changing the signs of some~$q_s$, due to Assumption~I.1, see Theorem~I.5. Consider just one~$q_s$, and imagine a sequence of Pachner moves where the last of them removes 2-face~$s$ from the triangulation. Then do this sequence backwards, beginning with inserting~$s$ and the corresponding~$q_s$ with the \emph{changed} sign. This can always be done, because our formulas \eqref{33}, \eqref{pw24} and~\eqref{pw15} hold for any choice of the signs of~$q_s$.
\end{proof}

\begin{theorem}\label{th:inv}
Assuming our Conjecture, our quantity~$I$~\eqref{inv} remains the same for all cocycles~$\omega$ within a given cohomology class~$h$ and is thus indeed an invariant of the pair $(M,h)$.
\end{theorem}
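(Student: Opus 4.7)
The plan is to adapt the Pachner-move trick used in Lemma~\ref{l:rr}. Given two 2-cocycles $\omega, \omega' \in h$, their difference $\omega' - \omega = \delta\alpha$ is a coboundary of some 1-cochain $\alpha \in C^1(T,\mathbb{C})$. I would decompose $\alpha$ as a sum of elementary 1-cochains each supported on a single edge $e$, and argue by induction that it suffices to treat the elementary case, where only the values $\omega_s$ for 2-faces $s \supset e$ are modified.

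For such an elementary change I would proceed as follows. By Pachner's theorem there exists a finite sequence of 3--3, 2--4 and 1--5 moves (and their inverses) transforming $T$ into a triangulation $T'$ in which the edge $e$ has been eliminated. Run the sequence forward using the original cocycle $\omega$; by Theorems~\ref{th:33}, \ref{th:m1} and~\ref{th:m2} the value of $I$ is preserved at every step. Once $e$ is absent, the values $\omega_s$ for $s \supset e$ no longer enter into $I$ at all. Now run the sequence in reverse, but whenever a 2-face $s \supset e$ is reintroduced, use the modified value $\omega_s + (\delta\alpha)_s$ instead of $\omega_s$. Since the Grassmann-algebraic relations \eqref{33}, \eqref{pw24}, \eqref{pw15} hold for \emph{any} valid cocycle values assigned to the affected 2-faces (this is the same freedom that was exploited in Lemma~\ref{l:rr}), each reverse move again preserves $I$. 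The end state is $(T,\omega')$ with the same value of $I$.

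The main obstacle is verifying that the reintroduction with modified values is actually compatible with an ambient 2-cocycle in the class $h$. This should follow because the star of $e$ in $T$ is contractible, so any two extensions of the cocycle from $T \setminus \operatorname{star}(e)$ differing only on the 2-faces containing $e$ are cohomologous in $T$ and both represent $h$; moreover, the freedom in choosing the new $\omega$-values during a reverse Pachner move is precisely such an extension. Consistency of the sign choices for the square roots $K_t$ along the sequence is already guaranteed by Lemma~\ref{l:rr}, and independence of the vertex ordering by Theorem~\ref{th:oe}. Combined with items~\ref{i:a}--\ref{i:c} of the list at the start of this section, this completes the proof that $I$ is indeed an invariant of the pair $(M,h)$. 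An alternative route would be to realize the coboundary change directly as a ``gauge transformation'' of the Grassmann variables $\vartheta_t$ absorbed into the prefactors $q_s^{-1}$ and $\eta_u$; this would avoid invoking Pachner moves but require a delicate direct check of the Pfaffian and minors in~\eqref{rt} under the rescaling, which seems harder to control than the Pachner-move argument above.
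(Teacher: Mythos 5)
Your proof is correct and follows essentially the same route as the paper's: reduce to an elementary coboundary change $c\,\delta b$ on a single edge, remove that edge by a Pachner sequence, and reinsert it running the sequence backwards with the modified cocycle, the critical first reinsertion being harmless because the 0--2 pillow relation holds for any cocycle values on the newly created 2-faces (the paper phrases this as the pillow weight being a cocycle-independent delta function). Your extra worry about compatibility of the modified values with an ambient cocycle is unnecessary, since $\omega + c\,\delta b$ is a cocycle on the whole triangulation by construction and restricts to one on every intermediate triangulation.
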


\begin{proof}
Recall that we agreed (in Section~I.3) to denote a 1-cochain taking value~$1$ on an edge~$b$ and vanishing on all other edges, simply by the same letter~$b$. Any 2-\emph{coboundary} is a linear combination of edge coboundaries~$\delta b$. We are going to show how to change
\begin{equation}\label{co}
\omega \mapsto \omega + c\,\delta b
\end{equation}
for any edge~$b$ and number~$c$, without changing~$I$.

First, we recall that our pillow Grassmann weights, taken with the corresponding multipliers, are simply delta functions, see \eqref{pw24} and~\eqref{pw15}. Hence, they do \emph{not} depend on the cocycle~$\omega$. If there is such pillow within a triangulation, and we change~$\omega$ by a multiple of~$\delta b$ for any edge~$b$ lying \emph{inside} the pillow, this will not affect~$I$ either.

Second, there always exists a sequence of Pachner moves whose last move 4--2 or 5--1 takes $b$ away from the triangulation. Now we do such sequence, and then its inverse, returning to the initial triangulation. But when we do the first move 2--4 or 1--5 in the inverse sequence, we represent it as a composition of moves where a move 0--2 is the first (according to Subsection~\ref{ss:24} or~\ref{ss:15}), and, while doing this 0--2, change $\omega$, with respect to its initial values, according to~\eqref{co}.
\end{proof}

\end{document}